\renewcommand\showkeyslabelformat[1]{}
\newtheorem{theorem}{Theorem}[section]
\newtheorem{lemma}[theorem]{Lemma}
\newtheorem*{lemma*}{Lemma}
\newtheorem{remark}[theorem]{Remark}
\numberwithin{equation}{section}
\newcommand{\labitem}[2]{%
\def\@itemlabel{\textbf{#1}}
\item
\def\@currentlabel{#1}\label{#2}}
\newcommand{\abs}[1]{\left|{#1}\right|}
\newcommand{\rkla}[1]{{\left(#1\right)}}
\newcommand{\trkla}[1]{{(#1)}}
\newcommand{\gkla}[1]{{\left\{#1\right\}}}
\newcommand{\tgkla}[1]{{\{#1\}}}
\newcommand{\ekla}[1]{{\left[#1\right]}}
\newcommand{\tekla}[1]{{[#1]}}
\newcommand{\tabs}[1]{|{#1}|}
\newcommand{\jump}[1]{\left[\!\left[#1\right]\!\right]}
\newcommand{\bs}[1]{\boldsymbol{#1}}
\newcommand{\mb}[1]{\mathbf{#1}}
\newcommand{\iOmega}{\int_{\Omega}}
\newcommand{\para}[1]{\partial _{#1}}
\newcommand{\dy}{\; \mathrm{d}\mathbf{y}}
\newcommand{\nablax}{\nabla_{\mb{x}}}
\newcommand{\nablay}{\nabla_{\mb{y}}}
\renewcommand{\div}{\operatorname{div}}
\newcommand{\divx}{\operatorname{div}_{\mb{x}}}
\newcommand{\divy}{\operatorname{div}_{\mb{y}}}
\newcommand{\taufast}{\tau_{-1}}
\newcommand{\tauslow}{\tau_0}
\newcommand{\dtaufast}{\partial_{\taufast}}
\newcommand{\dtauslow}{\partial_{\tauslow}}
\newcommand{\eps}{\varepsilon}
\newcommand{\mol}[1][\empty]{
  \ifthenelse{\equal{#1}{\empty}}
  {\mathcal{J}_{\eps}}
  {\mathcal{J}_{\eps}\gkla{#1}}
}
\newcommand{\molb}[1][\empty]{
  \ifthenelse{\equal{#1}{\empty}}
  {\bs{\mathcal{J}}_{\eps}}
  {\bs{\mathcal{J}}_{\eps}\gkla{#1}}
}
\newcommand{\molbb}[1][\empty]{
  \ifthenelse{\equal{#1}{\empty}}
  {\bs{\mathfrak{J}}_{\eps}}
  {\bs{\mathfrak{J}}_{\eps}\gkla{#1}}
}
\newcommand{\molh}[1][\empty]{
  \ifthenelse{\equal{#1}{\empty}}
  {\mathcal{J}_{h}}
  {\mathcal{J}_{h}\gkla{#1}}
}
\newcommand{\molhb}[1][\empty]{
  \ifthenelse{\equal{#1}{\empty}}
  {\bs{\mathcal{J}}_{h}}
  {\bs{\mathcal{J}}_{h}\gkla{#1}}
}
\newcommand{\molhbb}[1][\empty]{
  \ifthenelse{\equal{#1}{\empty}}
  {\bs{\mathfrak{J}}_{h}}
  {\bs{\mathfrak{J}}_{h}\gkla{#1}}
}
\newcommand{\R}{\mathds{R}}
\newcommand{\xflow}{\mb{x}}
\newcommand{\uflow}{\mb{u}}
\newcommand{\yflow}{\mb{y}}
\newcommand{\restr}[2]{\ensuremath{
  \left.\kern-\nulldelimiterspace 
  #1 
  \vphantom{\big|} 
  \right|_{#2} 
  }}
\newcommand{\RefDomain}{Y}
\newcommand{\Domain}{\Omega}
\newcommand{\Solid}{\mathfrak{S}}
\newcommand{\Solideps}{\Solid^{\varepsilon}}
\newcommand{\Fluid}{\mathfrak{F}}
\newcommand{\Fluideps}{\Fluid^\varepsilon}
\newcommand{\dFluid}{\partial\Fluid}
\newcommand{\dFluideps}{\partial\Fluideps}
\newcommand{\iFluideps}{\int_{\Fluideps}}
\newcommand{\iFluid}{\int_{\Fluid}}
\newcommand{\iFluidw}{\int_{\Fluid_w}}
\newcommand{\iFluidn}{\int_{\Fluid_n}}
\newcommand{\iDF}{\int_{\Domain\times\Fluid}}
\newcommand{\idDF}{\int_{\partial\Domain\times\Fluid}}
\newcommand{\Sat}{S}
\newcommand{\Satw}{\Sat_{w}}
\newcommand{\Satn}{\Sat_{n}}
\newcommand{\rhow}{{\rho}_w}
\newcommand{\rhon}{{\rho}_n}
\newcommand{\etan}{{\eta}_n}
\newcommand{\etaw}{{\eta}_w}
\newcommand{\hatetan}{\hat{\eta}_n}
\newcommand{\hatetaw}{\hat{\eta}_w}
\newcommand{\charn}{\chi_n}
\newcommand{\charw}{\chi_w}
\newcommand{\charndelta}{\chi_n^\delta}
\newcommand{\charwdelta}{\chi_w^\delta}
\newcommand{\vw}{\overline{\mb{v}}_w}
\newcommand{\vn}{\overline{\mb{v}}_n}
\newcommand{\vwd}{\overline{\mb{v}}_w^\delta}
\newcommand{\vnd}{\overline{\mb{v}}_n^\delta}
\newcommand{\cref}[1]{Fig. \ref{#1}}
\begin{document}
\title[]{Homogenization of two-phase flow in porous media from Pore to Darcy Scale: A phase-field approach}
\date{}
\author{Stefan Metzger}
\address{Department of Mathematics, Friedrich--Alexander--Universit\"at Erlangen--N\"urnberg, Erlangen, Germany}
\email{stefan.metzger@fau.de}

\author{Peter Knabner}
\address{Department of Mathematics, Friedrich--Alexander--Universit\"at Erlangen--N\"urnberg, Erlangen, Germany}
\email{knabner@math.fau.de}

\keywords{Homogenization, time scales, two-phase flow, porous media, Navier-Stokes, phase-field}
\subjclass[2010]{35B27, 76S05, 76D05, 65L60 , 35G20, 35Q35}
\selectlanguage{english}

\maketitle
\begin{abstract}
 We extend the two-scale expansion approach of periodic homogenization to include time scales and thus can tackle the full instationary Navier-Stokes-Cahn-Hilliard model at the pore scale as microscale. Time scale separation allows us to keep microscale dynamics, responsible e.g. for hysteresis, and arrive at a numerically tractable micro-macro model including coupled generalized Darcy's laws.
\end{abstract}
\section{Introduction}
The mathematical modeling and numerical simulation of multiphase flow (water-air, water-oil , \ldots) in porous media is of paramount importance in science and engineering. One widespread model combines mass conservation of each phase with a phase related generalized Darcy law (see e.g. Chapter 5.2 in \cite{Hornung1997})
\begin{subequations}\label{eq:twophasedarcy}
\begin{align}
\zeta\para{t}\trkla{S_\xi\rho_\xi}+\divx\tgkla{\rho_\xi\mathbf{v}_\xi}=f_\xi&&\text{in }\Omega\times(0,T)\,,\label{eq:darcy:sat}\\
\mathbf{v}_\xi=-\Lambda_{\xi}\trkla{S_w}\trkla{\nablax p_\xi +\rho_\xi g\mathbf{e}_3}&&\text{in }\Omega\times(0,T)\label{eq:darcy:vel}\,,
\end{align}
\end{subequations}
where the index $\xi\in\tgkla{w,n}$ is referring to the wetting phase $w$ or non-wetting phase $n$, respectively.
Here, $\rho_\xi$ denotes the mass density of phase $\xi$ and $\trkla{\mathbf{v}_\xi,p_\xi}$ are velocity and pressure of phase $\xi$.
By $\zeta$ and $g$, we denote the porosity and the gravitational acceleration, while $\Lambda_\xi$ denotes the mobility tensor of phase $\xi$, which is assumed to depend solely on the saturations $S_\xi$.
To close the system \eqref{eq:darcy:sat}-\eqref{eq:darcy:vel}, it is assumed that the saturations satisfy
\begin{align}\label{eq:twophase:saturations}
\Satw+\Satn=1
\end{align}
and that the pressure difference
\begin{align} \label{eq:pressure_difference}
  p_w-p_n=p_c\trkla{S_w},
\end{align}
the so-called capillary pressure, is uniquely given by the saturation $\Satw$.

This model is formulated in a spatial scale, which does not see the complex solid-fluid-fluid geometry anymore, i.e. on a macroscale (here also called Darcy scale), on which this geometry is only encoded through ``effective'' coefficients and relations (e.g. $\Lambda_{\xi}, p_{c}, \ldots$). It dates back to Richards and Muskat in the 1930ies, who extended intuitively Darcy's law for saturated one phase flow, and was also found experimentally in the 1860ies. This model has been applied very successfully over the decades to model and simulate multiphase flow processes (see e.g \cite{Helmig1997}). Nevertheless it has some major flaws becoming more and more visible and impeding in more complex applications. In particular in cannot explain hysteresis effects observed in measurements. Therefore, starting at least in the 1980ies, two related developments emerged: On the one hand side apporaches to justify this model by upscaling procedures starting from a continuum mechanics based models at the pore scale as the microscale, and on the other hand the proposal of modified or completely new models. For upscaling two procedures  are established: Volume averaging with its extension to an thermodynamically constrained averaging theory (TCAT) (see \cite{Gray_Miller2014} and the references therein), and (periodic) homogenization (see \cite{Hornung1997}). Both approaches start form a continuum mechanics based model on a geometry resolving the microscale, i.e. in the situation considered  here, where the microscale is the pore scale, there is a resolution of the solid geometry and the various fluid phases. Volume averaging tries to convert averages of differential quantities to differential quantities of averages, at the expense of further (typically boundary) terms modeling phase exchange. For these terms either conditions are identified which render them negligible or they are expressed by closure relations in macroscopic, i.e. averaged, quantities. In the TCAT version, entropy relations as side conditions try to reduce the ambiguity in selecting the closure relations. Therefore volume averaging is formal in various aspects, in particular applied to multiphase flows in a sharp interface formulation. Furthermore pure spatial averaging as it is done usually cannot take into account several temporal scales, which come in naturally considering instationary situations. \\ Periodic homogenization assumes spatial periodicity of a microstructure of the underlying domain $\Omega\subset\R^d$ with $d\in\tgkla{2,3}$. More precisely, for further usage we assume that  
it can be described using a reference cell $\RefDomain=(0,1)^d$. This reference cell consists of a solid part $\Solid$ and a fluid part $\Fluid$ with boundary $\dFluid$. Therefore, we may describe the complete fluidic domain $\Fluideps$ by $\bigcup_{z\in\mathds{Z}^d}\varepsilon\rkla{\Fluid+z}\cap\Omega$. Here, $\varepsilon<\!<\!1$ is the quotient of the typical length of the periodic microstructure and the macroscopic length scale (in this publication scaled to one). Similarly, we describe the solid matrix by $\Solideps=\bigcup_{z\in\mathds{Z}^d}\varepsilon\rkla{\Solid+z}\cap\Omega$ (cf. \cref{fig:domain}).
The boundary between the fluidic domain and the solid matrix will be denoted by $\dFluideps:=\overline{\Fluideps}\cap\overline{\Solideps}$.
 For the remainder of this paper, we assume that the fluidic domain $\Fluideps$ is open and  connected with a smooth boundary. For the $\varepsilon$ dependent solutions of the microscale model, convergence to the solution of a problem with no or at least less microstructure is aimed for, either in two-scale convergence (see e.g. \cite{Pavliotis_Stuart2008}) or by assuming a two-scale asymptotic expansion like \eqref{eq:expansion} and thus distinguishing between a ``slow'' ($\boldsymbol{x}$) and a fast ($ \boldsymbol{y} = \boldsymbol{x}/\varepsilon $) scale and aiming for determining equations of the leading terms. 

In Chapter 5.1 of \cite{Hornung1997} a slightly more general macroscopic two-phase, Darcy-like model, where the velocities are given by
\begin{subequations}\label{eq:darcy:general}
\begin{align}
\mb{v}_w=-\Lambda_w^w\lambda_w-\Lambda_w^n\lambda_n\,,\\
\mb{v}_n=-\Lambda_n^w\lambda_w-\Lambda_n^n\lambda_n\,
\end{align}
\end{subequations}
with $\lambda_\xi$ denoting the macroscopic forces acting on phase $\xi$, was derived by periodic homogenization.
This model is based on the following assumptions.
\begin{itemize}
\labitem{H1}{item:contactangle} $\Gamma$ intersects $\dFluideps$ with a wetting angle $\theta$.
\labitem{H2}{item:surfacetension} The surface tension $\hat{\tilde{\sigma}}$ is of order $\varepsilon$, $\hat{\tilde{\sigma}}=\varepsilon\tilde{\sigma}$ (to make the surface tension of order 1 at the level of the pore, because, then the curvature in a capillary tube is of order $\varepsilon$).
\labitem{H3}{item:jumpcondition} According to the Laplace relation, the jump of the normal component of the stress tensor across $\Gamma$ is given by $\jump{\trkla{p\mathds{1}-\boldsymbol{S}}\mathbf{n}}=\hat{\tilde{\sigma}}\trkla{R_1^{-1}+R_2^{-1}}\mathbf{n}$ where $\boldsymbol{S}$ is the viscous stress tensor, $R_1$ and $R_2$ are the main curvature radii, and $R_1^{-1}+R_2^{-1}$ is called the mean curvature. The tangential component is continuous.
\labitem{H4}{item:creepingflow} The flow in each phase is slow enough (creeping flow) to neglect inertial effects.
\labitem{H5}{item:stationaryinterface} The flow is assumed to be stationary. Then, the tangential component of the velocity is continuous across the interface $\Gamma$, and the normal component is zero.
\labitem{H6}{item:capillarypressure} There is a relation between the stress tensor jump through the interface and the saturation $\Sat_\xi$ in the cell.
\end{itemize}
From these conditions, in particular \ref{item:creepingflow}, \ref{item:stationaryinterface}, and \ref{item:capillarypressure} are critical: Assumption \ref{item:stationaryinterface} includes  a fixed interface position.
Therefore, detached droplets are not allowed to be transported but have to stay in place. A more detailed discussion of related results is given in Section~\ref{sec:comparison_models}.
Furthermore, \ref{item:creepingflow} implies the existence of a unique stationary state w.r.t. some faster time scale (cf. Remark \ref{rem:timescales}) and \ref{item:capillarypressure} prohibits dependency on the actual droplet topology. \\
In this paper we aim at applying an unconventional two-scale-expansion assumption to the full instationary Navier-Stokes equation with a phase-field description of the fluid-fluid interfaces. As indicated, it cannot be expected that we reach the ideal case of scale separation with purely macroscopic equations where the (periodic) pore geometry is only present  via cell problems determining the ``effective'' coefficients of the macroscopic equations. It is well-known for a for a long time (see e.g. \cite{Hornung1997}) that this is the case for stationary one phase flow, starting from the stationary Stokes equation on the microscale. First, when dealing with instationary problems, we have to relate temporal scales to the slow/fast spatial scales. We succeed in arriving at fast scale \eqref{eq:fast} and slow scale \eqref{eq:slow} equations, but still fully coupled with mixed derivatives in \eqref{eq:slow}. We can show energy estimates (Lemma~\ref{lem:micro:energy}, Lemma~\ref{lem:slow:energy}) for both equations showing that the system will evolve to a pseudo-stationary state (chemical potential $\mu_{0}$ independent of $\boldsymbol{y}$). This enables us, using Assumption \ref{item:creepingflow} for the first time, to recover a generalized Darcy's law in the form \eqref{eq:darcy:general} (cf. \eqref{eq:def:macrovel} or \eqref{eq:def:macrovel:2}). We eventually end up with a micro-macro model as described in Section~\ref{sec:numerics}, in which microscale cell problems are coupled in both ways at each point of the macroscopic domain. In Section~\ref{sec:numerics} we concentrate on microscale simulations to elucidate the impact to be expected from microscopic phenomena to macroscopic quantities. The efficient algorithmic treatment of such micro-macro models is still in its infancy (see e.g. \cite{Ray2019}) and out of the scope of this paper, but in particular due to their decoupling and adaptation potential they seem to be competitive with global extensions of the standard model: Also based on upscaling procedures, models have been proposed which either extend existing equations, e.g. making the relation \eqref{eq:pressure_difference} dynamic (see \cite{Hassan_Gray1993}) or by introducing new macroscopic quantities, i.e. interfacial area (see \cite{Niessner_Hassan2008}).

\begin{figure}
\includegraphics[width=\linewidth]{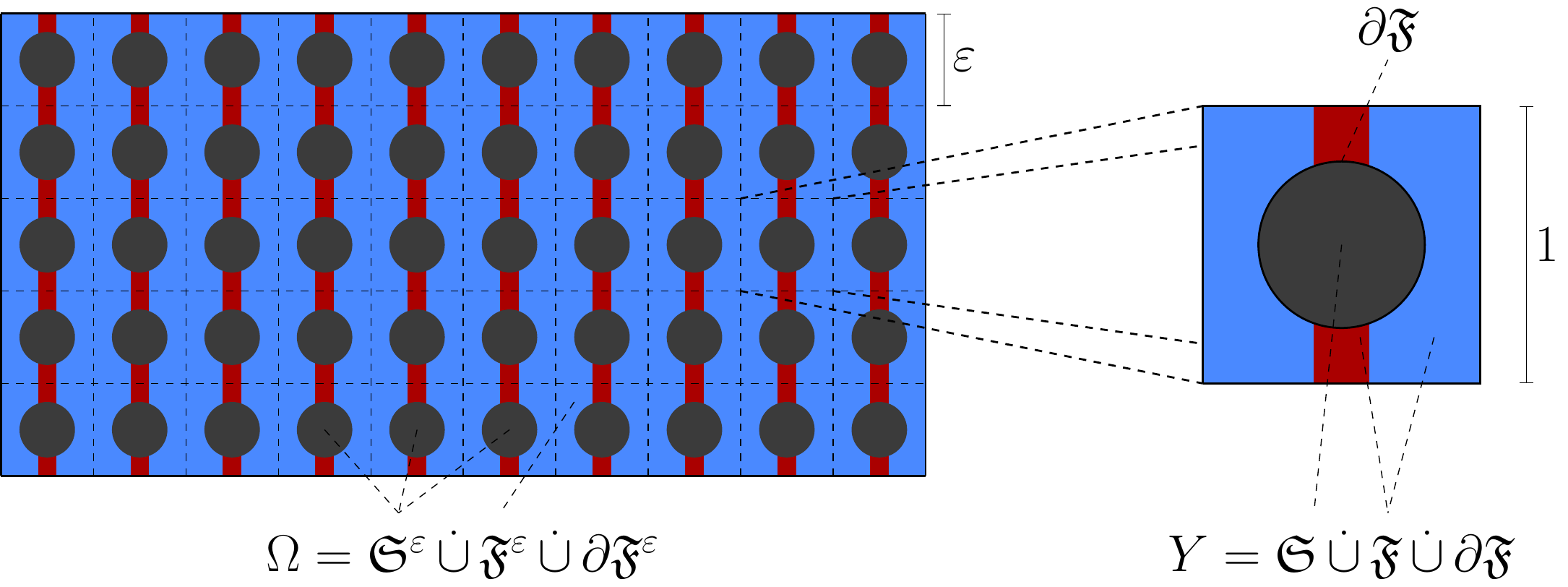}
\caption{Sketch of the periodic domain $\Domain$ comprising a periodic solid matrix $\Solideps$ and a fluid part $\Fluideps$ with boundary $\dFluideps$. $\Domain$ consist of periodic cells of length $\varepsilon$ which can be mapped onto the reference cell $\RefDomain$ of length 1 which analogously consists a solid part $\Solid$ and a fluid part $\Fluid$ separated by $\dFluid$.}
\label{fig:domain}
\end{figure}

\section{The governing equations}\label{sec:goveq}
As it is the aim of this publication to improve the understanding of the interactions between the different length scales and to obtain enhanced models, we refrain from using the standard (but inadequate) assumption that the flow is slow enough to neglect inertial effects (cf. \ref{item:creepingflow}) as long as possible.
Hence, it is not sufficient to describe the evolution of the fluid velocity by the stationary Stokes equations.
Instead, we shall consider the complete Navier--Stokes equations and include also effects on fast time scales.

Since we expect the effective parameters to rely heavily on the topology on the microscopic length scale, we describe the two-phase flow using a diffuse interface model (cf. \cite{Hohenberg1977,Lowengrub1998,Ding2007,AGG2012}).
The advantage of this approach is that topological changes can be handled without imposing artificial side conditions.

In such models, a so-called phase-field parameter $\phi\in\tekla{-1,+1}$ is used to describe the two fluid phases.
Throughout this manuscript, we will use the convention that the wetting phase is indicated by $\phi=+1$ and the non-wetting phase by $\phi=-1$.
Between the two pure phases, there exists a small transition region where $\phi$ varies smoothly.
Originally, phase-field models were derived for the equal mass density case (cf. \cite{Hohenberg1977}), but there are also several extensions to include the effects of different mass densities (see e.g. \cite{Gurtin1996,Lowengrub1998,Ding2007,AGG2012}).
In this publication, we use the model proposed by Abels, Garcke, and Grün in \cite{AGG2012}.
This model can be derived from general balance equations for mass and momentum by combining an energetic description of the system with Onsager's variational principle (cf. \cite{Onsager1931a,Onsager1931}).
It has the advantage of a solenoidal (volume averaged) fluid velocity field $\boldsymbol{u}$, while still satisfying an energy estimate for the total energy which consists of the kinetic energy $\mathcal{E}_{\text{kin}}:=\iFluideps \tfrac12\rho\trkla{\phi}\abs{\uflow}^2$ and the interfacial energy $\mathcal{E}_{\text{int}}:=\iFluideps \hat{\delta}\hat{\sigma}\tfrac12\abs{\nabla\phi}^2+\iFluideps \hat{\delta}^{-1}\hat{\sigma} W\trkla{\phi}$.
Here,
\begin{align}\label{eq:def:rho}
\rho\trkla{\phi}=\rhon\trkla{-\tfrac12\phi+\tfrac12}+\rhow\trkla{\tfrac12\phi+\tfrac12}\,,
\end{align}
interpolates between the mass density $\rhow:=\rho\trkla{+1}$ of the wetting phase and the mass density $\rhon:=\rho\trkla{-1}$ of the non-wetting phase.
As $\rho$ depends affine linearly on $\phi$, its derivative $\rho^\prime=\tfrac12\trkla{\rhow-\rhon}$ is constant.
$W\trkla{\phi}:=\tfrac14\trkla{1-\phi^2}^2$ is a polynomial double-well potential with minima in the pure phases $\phi=\pm1$.
The parameter $\hat{\delta}<\!\!<\varepsilon$ is related to the non-dimensional interface thickness which is assumed to be small with respect to the size of the periodic structure.
The parameter $\hat{\sigma}$ is related to the surface tension $\hat{\tilde{\sigma}}$ via $\hat{\sigma}=c_W\hat{\tilde{\sigma}}$ with a constant $c_W$ depending on the choice of the double-well potential.
For the polynomial double-well potential, we have $c_W=3/\trkla{2\sqrt{2}}$. 
The resulting microscopic model reads
\begin{subequations} \label{eq:micro:CH}
\begin{align}
&\para{t}\phi+\uflow\cdot\nabla\phi=\div\tgkla{\hat{M}\nabla\mu} &&\text{ in }\Fluideps\,,\label{eq:micro:CH:1}\\
&\mu=\hat{\sigma}\hat{\delta}^{-1}W^\prime\trkla{\phi}-\hat{\sigma}\hat{\delta}\Delta\phi &&\text{ in }\Fluideps\,,\label{eq:micro:CH:2}\\
&\nabla\mu\cdot\mathbf{n}=0&&\text{ on }\dFluideps\,,\label{eq:micro:CH:3}\\
&\nabla\phi\cdot\mathbf{n}=0 &&\text{ on }\dFluideps\,,\label{eq:micro:CH:4}
\end{align}
\end{subequations}
\begin{subequations}
\label{eq:micro:S}
\begin{multline}
\rho\trkla{\phi}\para{t}\uflow +\rho\trkla{\phi}\trkla{\uflow\cdot\nabla}\uflow-\trkla{\rho^\prime\hat{M}\nabla\mu\cdot\nabla}\uflow-\div\tgkla{2\hat{\eta}\trkla{\phi}\mathbf{D}\uflow}+\nabla p\\=-\phi\nabla\mu-\rho\trkla{\phi} g\mathbf{e}_3\qquad\qquad\text{ in } \Fluideps\,,\qquad\label{eq:micro:S:1}
\end{multline}
\begin{align}
&\div\uflow=0&\text{ in } \Fluideps\,,\label{eq:micro:S:2}\\
&~\uflow=0&\text{ on }\dFluideps\label{eq:micro:S:3}\,,
\end{align}
\end{subequations}
with suitable boundary conditions on $\partial\Omega$.
Here, $\mu$ denotes the chemical potential, i.e. the variation of the interfacial energy $\mathcal{E}_{\text{int}}$ with respect to the phase-field parameter $\phi$.
$\hat{M}$ denotes the mobility which is assumed to be constant and to scale linearly with $\hat{\delta}$, i.e. we have $\hat{M}=\hat{\delta}M$ with some constant $M>0$ (see also \cite{AGG2012} for other scalings).
$\mb{D}\uflow$ denotes the symmetrized gradient $\tfrac12\trkla{\nabla\uflow+\nabla^T\uflow}$, $g$ denotes the non-dimensional gravitational acceleration acting in $-\mb{e}_3$-direction and
\begin{align}\label{eq:def:eta}
\hat{\eta}\trkla{\phi}=\hatetan\trkla{-\tfrac12\phi+\tfrac12}+\hatetaw\trkla{\tfrac12\phi+\tfrac12}
\end{align}
interpolates between the viscosities $\etaw:=\eta\trkla{+1}$ and $\etan:=\eta\trkla{-1}$ of the wetting and non-wetting phase.

The boundary condition \eqref{eq:micro:CH:3} describes the impenetrability of the solid matrix and \eqref{eq:micro:CH:4} prescribes a contact angle of $90^\circ$.
The latter can be modified to allow for other contact angles (cf. \cite{Qian2006}).
However, for reasons of simplicity, we will stick to condition \eqref{eq:micro:CH:4} and the contact angle of $90^\circ$, but continue to speak of a wetting and a non-wetting phase.
The additional term $\trkla{\trkla{\rho^\prime\hat{M}\nabla\mu}\cdot\nabla}\uflow$ guarantees that mass density and momentum $\trkla{\rho\uflow}$ are transported with the same velocity (see \cite{AGG2012}), which is a fundamental prerequisite for the energy estimate formulated below.

Under the assumption of a closed system, i.e. there is no flux across the boundary $\partial\Omega$ and \eqref{eq:micro:CH:3}, \eqref{eq:micro:CH:4}, and \eqref{eq:micro:S:3} hold true on $\trkla{\partial\Omega\cap\overline{\Fluideps}}\cup\dFluideps$, it is possible to show a formal energy estimate. In particular, testing \eqref{eq:micro:CH:1} by $\mu+\tfrac12\rho^\prime\abs{\uflow}^2$, \eqref{eq:micro:CH:2} by $\para{t}\phi$, and \eqref{eq:micro:S:1} by $\uflow$, we obtain
\begin{multline}\label{eq:originalenergy}
\para{t}\iFluideps \tfrac12\rho\trkla{\phi}\abs{\uflow}^2+\para{t}\iFluideps\hat{\delta}\hat{\sigma}\tfrac12\abs{\nabla\phi}^2+\para{t}\iFluideps \hat{\delta}^{-1}\hat{\sigma} W\trkla{\phi}\\
+\iFluideps 2\hat{\eta}\trkla{\phi}\abs{\mb{D}\uflow}^2+\iFluideps\hat{M}\abs{\nabla\mu}^2=-\iFluideps\rho\trkla{\phi}g\mb{e}_3\cdot\uflow\,,
\end{multline}
i.e. the change in the total energy is given by the dissipation $\iFluideps 2\hat{\eta}\trkla{\phi}\abs{\mb{D}\uflow}^2+\iFluideps\hat{M}\abs{\nabla\mu}^2$ and the gravitational force.
The above identity is an important physical property of the model.
Hence, we will validate our results by showing that \eqref{eq:originalenergy} can be recovered after separating the scales (cf. Sections \ref{subsec:micro} and \ref{subsec:slow}).

\section{The homogenized equations}
The model introduced in Section \ref{sec:goveq} describes the two-phase flow on the microscopic level and consequently requires a fine resolution of the complete domain $\Omega$.
In this section, we shall identify the important spatial and temporal length scales.
By separating these scales, we derive a simplified model which allows us to describe the fast components of the system's evolution by equations which are local in space, and global equations that have to capture only the slower components.
After discussing the properties of the arising equations in Section \ref{subsec:micro} and Section \ref{subsec:slow}, we utilize Assumption \ref{item:creepingflow} in Section \ref{subsec:macro} to obtain purely macroscopic equations which are similar to the two-phase Darcy's law \eqref{eq:twophasedarcy}.
\subsection{Separation of scales}
The considered problem comprises two different small parameters: The ratio between the microscopic and macroscopic length scales $\varepsilon$ and the interface thickness $\hat{\delta}$ which is significantly smaller than the microscopic length scale.
As we will not compute the sharp interface limit $\hat{\delta}\searrow0$ of the model rigorously in this publication, we will assume $\hat{\delta}=\delta\varepsilon$ with some $\delta<\!\!<1$ to ensure $\hat{\delta}<\!\!<\varepsilon$ and consider only the case of vanishing $\varepsilon$.
For the sharp interface limit $\hat{\delta}\searrow0$, we refer to e.g. \cite{AGG2012}.
Using the separation of length scales, we introduce two different spatial coordinate systems with $\xflow\in \Omega$ denoting the macroscopic position and $\yflow\in\RefDomain$ representing the position in the reference cell.
Consequently, we expand the spatial derivatives via
\begin{align}\label{eq:expansion:space}
\nabla=\varepsilon^{-1}\nablay+\nablax\,.
\end{align}
Similarly, we separate two time scales.
The fast time scale $\taufast$ will capture the microscopic droplet movement and in particular the changes in the droplet topology, while the macroscopic changes in saturation will be considered on a slow time scale $\tauslow$.
Using these time scales, we expand the time derivatives via
\begin{align}\label{eq:expansion:time}
\para{t}=\varepsilon^{-1}\dtaufast+\dtauslow\,,
\end{align}
i.e. the two time scales are related via the same factor $\varepsilon$ as the spatial scales (see also Remark \ref{rem:timescales} below).
We expand the solutions in terms of $\varepsilon$.
In particular, we assume
\begin{subequations}
\label{eq:expansion}
\begin{align}
\uflow&= \uflow_0 +\varepsilon^2\uflow_1+\mathcal{O}\trkla{\varepsilon^3}\,,\label{eq:expansion:u}\\
\phi&=\phi_0+\varepsilon^2\phi_1+\mathcal{O}\trkla{\varepsilon^3}\,,\label{eq:expansion:phi}
\end{align}
i.e. we use expansions in terms of $\varepsilon^2$ for velocity and phase-field parameter.
For the chemical potential $\mu$ and the pressure $p$ different expansions are necessary.
In particular, special caution is appropriate when specifying the asymptotic expansion of $\mu$, as \eqref{eq:micro:CH:2} is merely a definition.
Consequently, choosing an expansion for $\mu$ which does not fit to this definition introduces additional artificial constraints.
In \cite{Schmuck13}, such an artificial constraint led to total blending of the immiscible fluids on the fine scale.
This results in a one-to-one relation between saturation and chemical potential based on a physically unstable microscopic state and prohibits hysteresis effects in the capillary pressure -- saturation curves (cf. \eqref{eq:pressurejump:pure}).
Plugging \eqref{eq:expansion:phi} into the right-hand side of \eqref{eq:micro:CH:2} shows that the asymptotic expansion of $\mu$ has to be chosen as
\begin{align}\label{eq:expansion:mu}
\mu&=\mu_{0}+\varepsilon\mu_1+\mathcal{O}\trkla{\varepsilon^2}\,,
\end{align}
to be consistent with the \eqref{eq:expansion:phi}.
Concerning the expansion of the pressure, we recall that $p$ serves as a Lagrange multiplier in \eqref{eq:micro:S:1} for the constraint \eqref{eq:micro:S:2}.
Therefore, the expansion for $p$ has to be chosen such that the equations arising from the different orders of \eqref{eq:micro:S:1} are always well-posed.
In our case, this is true for
\begin{align}\label{eq:expansion:p}
p&=p_0+\varepsilon p_1+\mathcal{O}\trkla{\varepsilon^2}\,.
\end{align}
\end{subequations}
\\
Adapting the scaling in \cite[Chapter 5]{Hornung1997}, we assume $\hat{\eta}\trkla{\phi}=\varepsilon^2\eta\trkla{\phi}$ and $\hat{\sigma}=\varepsilon\sigma$ (see also \ref{item:surfacetension}).
We substitute the aforementioned expansions into \eqref{eq:micro:CH} and \eqref{eq:micro:S} and sort the resulting terms by powers of $\varepsilon$.
At the leading order $\mathcal{O}\trkla{\varepsilon^{-1}}$, we obtain from \eqref{eq:micro:S:1} and \eqref{eq:micro:S:2}
\begin{subequations}\label{eq:fast}
\begin{align}\label{eq:fast:Stokes:1}
\rho\trkla{\phi_0}\dtaufast\uflow_0+\rho\trkla{\phi_0}\trkla{\uflow_0\cdot\nablay}\uflow_0 &-\trkla{\rho^\prime M\nablay\mu_0\cdot\nablay}\uflow_0+\nablay p_0=-\phi_0\nablay\mu_0\,,\\
\divy\uflow_0&=0\label{eq:fast:Stokes:2}
\end{align}
for $\yflow\in\Fluid$ and $\xflow\in\Omega$. Additionally, we obtain from \eqref{eq:micro:S:3}
\begin{align}
\uflow_0=0\label{eq:fast:boundaryu}
\end{align}
for $\yflow\in\dFluid$ and $\xflow\in\Omega$.
Noting $W^\prime\trkla{\phi}=W^\prime\trkla{\phi_0}+\mathcal{O}\trkla{\varepsilon^2}$ and collecting the terms of leading order $\mathcal{O}\trkla{\varepsilon^{-1}}$ in \eqref{eq:micro:CH:1} and $\mathcal{O}\trkla{\varepsilon^0}$ in \eqref{eq:micro:CH:2} provides
\label{eq:fast:CH}
\begin{align}
\dtaufast\phi_0&+\uflow_0\cdot\nablay\phi_0=\divy\tgkla{M\nablay\mu_{0}}\,,\label{eq:fast:CH:1}\\
\mu_{0}=&\sigma\delta^{-1}W^\prime\trkla{\phi_0}-\delta\sigma\Delta_\yflow\phi_0\label{eq:fast:CH:2}
\end{align}
for $\yflow\in\Fluid$ and $\xflow\in\Omega$. From \eqref{eq:micro:CH:3} and \eqref{eq:micro:CH:4}, we obtain the boundary conditions
\begin{align}
\nablay\mu_{0}\cdot\mathbf{n}&=0\,,\label{eq:fast:boundarymu}\\
\nablay\phi_0\cdot\mathbf{n}&=0\label{eq:fast:boundaryphi}
\end{align}
\end{subequations}
for $\yflow\in\dFluid$ and $\xflow\in\Omega$.\\
The equations \eqref{eq:fast} allow us to determine the microscopic behavior on the fast time scale.
As they include only spatial derivatives with respect to the cell coordinate $\yflow$, they can be solved in each macroscopic point $\xflow\in\Omega$ independently.
We shall discuss these equations in Section \ref{subsec:micro} in more detail.
In particular, we will prove the existence of a pseudo-stationary state with respect to the fast time scale and derive an expression for the capillary pressure.

The slow-scale equations arise from collecting the $\mathcal{O}\trkla{\varepsilon^0}$-terms in \eqref{eq:micro:S:1}, \eqref{eq:micro:S:2}, and \eqref{eq:micro:CH:1}, as well as the $\mathcal{O}\trkla{\varepsilon^1}$-terms in \eqref{eq:micro:CH:2}.
They read
\begin{subequations}\label{eq:slow}
\begin{multline}
\rho\trkla{\phi_0}\dtauslow\uflow_0+\rkla{\rkla{\rho\trkla{\phi_0}\uflow_0-\rho^\prime M\nablay\mu_0}\cdot\nablax}\uflow_0 -\rkla{\rho^\prime M\nablax\mu_0\cdot\nablay}\uflow_0\\
 -\rkla{\rho^\prime M\nablay\mu_1\cdot\nablay}\uflow_0
-\divy\gkla{2\eta\trkla{\phi_0}\mathbf{D_y}\uflow_0}+\nablay p_1+\nablax p_0\\
 = -\phi_0\nablay\mu_1-\phi_0\nablax\mu_0-\rho\trkla{\phi_0} g\mathbf{e}_3\,,\label{eq:slow:S:1}
\end{multline}
\begin{align}
\divx\uflow_0=0\,,\label{eq:slow:S:2}
\end{align}
\begin{align}
\dtauslow\phi_0 +\uflow_0\cdot\nablax\phi_0&=\divy\gkla{M\nablay\mu_1}+\divy\gkla{M\nablax\mu_0}+\divx\gkla{M\nablay\mu_0}\,,\label{eq:slow:CH:1}\\
\mu_1&=-\sigma\delta\divy\nablax\phi_0-\delta\sigma\divx\nablay\phi_0\,\label{eq:slow:CH:2}
\end{align}
with the additional boundary conditions
\begin{align}
\rkla{\nablay\mu_1+\nablax\mu_0}\cdot\mathbf{n}=0\,,\label{eq:slow:CH:bound:1}\\
\nablax\phi_0\cdot\mathbf{n}=0\,.\label{eq:slow:CH:bound:2}
\end{align}
\end{subequations}
In contrast to the fast scale equations \eqref{eq:fast}, the above equations include spatial derivatives with respect to $\xflow$ and $\yflow$ and therefore have to be solved on $\Omega\times\RefDomain$ simultaneously.
However, they do not include changes on the fast time scale anymore.
We will discuss the properties of these equations in Section \ref{subsec:slow}.
In Section \ref{subsec:macro}, we will combine them with Assumption \ref{item:creepingflow} to derive a purely macroscopic description of the slow scale evolution.

\begin{remark}[Time scales]\label{rem:timescales}
On the first glance, the choice of the two time scales $\taufast$ and $\tauslow$ and expansion \eqref{eq:expansion:time} might seem arbitrary.
However, recalling that velocity is the quotient of distance over time, it seems natural that the time scales for the microscopic transport and the macroscopic transport behave like the microscopic and macroscopic length scales.\\
Furthermore, this relation of the time scales can also be justified by mathematical considerations.
As we are interested in the macroscopic transport, it is obvious to consider at least the time scale $\tauslow$ which corresponds to the evolution of the macroscopic saturation.
On the other hand, omitting an explicit usage of the faster time scale $\taufast$ results in the disappearance of the time-derivatives in \eqref{eq:fast:CH:1} and \eqref{eq:fast:Stokes:1}.
This is equivalent to the assumption that the microscopic evolution of phase-field and velocity provides a unique stationary state, and therefore can be neglected.
Unfortunately, the considered partial differential equations are nonlinear and we can not expect the existence of such a unique stationary state, which is independent of initial data. Consequently, we need to consider the complete evolution from the initial data to the corresponding pseudo-stationary state (which we specify in Remark \ref{rem:pseudostationary}) and therefore require the time-derivatives in \eqref{eq:fast:CH:1} and \eqref{eq:fast:Stokes:1}.
\end{remark}

\subsection{The cell problem}\label{subsec:micro}
In this section, we will discuss the properties of the cell problem \eqref{eq:fast:Stokes:1}-\eqref{eq:fast:boundaryphi} and establish the existence of a pseudo-stationary state.
Thereby, the dissipation of free energy established in the following lemma will be a key ingredient.

\begin{lemma}[Microscopic energy estimate]\label{lem:micro:energy}
\!\!Let $\trkla{\phi_0,\mu_0,\uflow_0}$ be a solution to\eqref{eq:fast:Stokes:1}-\eqref{eq:fast:boundaryphi}.
Then, the leading order terms of the energy \eqref{eq:originalenergy} are not increasing on the fast time scale.
In particular, the following equality holds true for all $\xflow\in\Omega$:
\begin{multline}\label{eq:microscopicenergy}
\dtaufast\iFluid\tfrac12\rho\trkla{\phi_0}\abs{\uflow_0}^2+\dtaufast\iFluid\sigma\delta^{-1} W\trkla{\phi_0}+\dtaufast\iFluid\tfrac12\sigma\delta\abs{\nablay\phi_0}^2\\+\iFluid M\abs{\nablay\mu_0}^2=0\,.
\end{multline}
\end{lemma}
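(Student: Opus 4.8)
The plan is to mimic the formal computation that produced the original energy identity \eqref{eq:originalenergy}, but now carried out entirely on the fast scale, i.e.\ with $\partial_t$ replaced by $\dtaufast$, all spatial derivatives replaced by $\nablay$, and all coefficients taken at their leading-order values. Concretely, I would test \eqref{eq:fast:CH:1} with $\mu_0+\tfrac12\rho^\prime\abs{\uflow_0}^2$, test \eqref{eq:fast:CH:2} with $\dtaufast\phi_0$, and test \eqref{eq:fast:Stokes:1} with $\uflow_0$, then add the three resulting identities. The boundary terms that arise upon integration by parts over $\Fluid$ vanish: on $\dFluid$ we have $\uflow_0=0$ from \eqref{eq:fast:boundaryu}, $\nablay\mu_0\cdot\mb{n}=0$ from \eqref{eq:fast:boundarymu}, and $\nablay\phi_0\cdot\mb{n}=0$ from \eqref{eq:fast:boundaryphi}; since the cell $\RefDomain$ is periodic, the contributions on $\partial\RefDomain\cap\overline{\Fluid}$ cancel by periodicity of $\phi_0$, $\mu_0$, $\uflow_0$, $p_0$.

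First, testing \eqref{eq:fast:Stokes:1} by $\uflow_0$: the viscous term gives $\iFluid 2\eta(\phi_0)\abs{\mb{D_y}\uflow_0}^2$ (but observe that at this leading order there is in fact no viscous term in \eqref{eq:fast:Stokes:1} — only the inertial, Korteweg-correction, pressure, and capillary terms appear — so this test produces only kinetic-energy and transport contributions); the pressure term vanishes by \eqref{eq:fast:Stokes:2} and the boundary conditions; the term $\rho(\phi_0)(\uflow_0\cdot\nablay)\uflow_0$ together with the Korteweg correction $-(\rho^\prime M\nablay\mu_0\cdot\nablay)\uflow_0$ and the time derivative $\rho(\phi_0)\dtaufast\uflow_0$ must be reorganized into $\dtaufast\iFluid\tfrac12\rho(\phi_0)\abs{\uflow_0}^2$ plus a remainder $-\iFluid\tfrac12\bigl(\dtaufast\rho(\phi_0)+\divy\{\rho(\phi_0)\uflow_0-\rho^\prime M\nablay\mu_0\}\bigr)\abs{\uflow_0}^2$; and the right-hand side contributes $-\iFluid\phi_0\nablay\mu_0\cdot\uflow_0$. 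Second, testing \eqref{eq:fast:CH:2} by $\dtaufast\phi_0$ yields $\iFluid\mu_0\dtaufast\phi_0 = \dtaufast\iFluid\sigma\delta^{-1}W(\phi_0) + \dtaufast\iFluid\tfrac12\sigma\delta\abs{\nablay\phi_0}^2$, using $W^\prime(\phi_0)\dtaufast\phi_0=\dtaufast W(\phi_0)$ and integration by parts in the Laplacian term. Third, testing \eqref{eq:fast:CH:1} by $\mu_0$ gives $\iFluid\mu_0\dtaufast\phi_0 + \iFluid\mu_0\uflow_0\cdot\nablay\phi_0 = -\iFluid M\abs{\nablay\mu_0}^2$, and testing it by $\tfrac12\rho^\prime\abs{\uflow_0}^2$ gives $\iFluid\tfrac12\rho^\prime\abs{\uflow_0}^2\bigl(\dtaufast\phi_0+\uflow_0\cdot\nablay\phi_0\bigr) = \iFluid\tfrac12\rho^\prime\abs{\uflow_0}^2\divy\{M\nablay\mu_0\}$.

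The crux is then the cancellation of all terms that are not manifestly a time derivative of an energy density or the nonnegative dissipation $\iFluid M\abs{\nablay\mu_0}^2$. This is exactly the algebraic identity that motivates the Korteweg term $(\rho^\prime M\nablay\mu_0\cdot\nablay)\uflow_0$ in the model: since $\rho$ is affine in $\phi_0$, one has $\dtaufast\rho(\phi_0)=\rho^\prime\dtaufast\phi_0$ and $\divy\{\rho(\phi_0)\uflow_0\}=\rho^\prime\uflow_0\cdot\nablay\phi_0$ (the latter using $\divy\uflow_0=0$), so using \eqref{eq:fast:CH:1} the remainder term $-\iFluid\tfrac12\bigl(\dtaufast\rho(\phi_0)+\divy\{\rho(\phi_0)\uflow_0-\rho^\prime M\nablay\mu_0\}\bigr)\abs{\uflow_0}^2$ from the momentum test exactly cancels the two terms $\iFluid\tfrac12\rho^\prime\abs{\uflow_0}^2\bigl(\dtaufast\phi_0+\uflow_0\cdot\nablay\phi_0-\divy\{M\nablay\mu_0\}\bigr)$ coming from testing \eqref{eq:fast:CH:1} by $\tfrac12\rho^\prime\abs{\uflow_0}^2$. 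Meanwhile the capillary term $-\iFluid\phi_0\nablay\mu_0\cdot\uflow_0$ from the momentum test and the convective term $\iFluid\mu_0\uflow_0\cdot\nablay\phi_0$ from testing \eqref{eq:fast:CH:1} by $\mu_0$ cancel after an integration by parts (again using $\uflow_0=0$ on $\dFluid$ and $\divy\uflow_0=0$): $\iFluid\mu_0\uflow_0\cdot\nablay\phi_0 = -\iFluid\phi_0\uflow_0\cdot\nablay\mu_0$. What survives is precisely \eqref{eq:microscopicenergy}. I expect the main obstacle to be bookkeeping: keeping the Korteweg-correction cancellation transparent, and being careful that the energy identity is an equality (not an inequality) because every boundary term genuinely vanishes — so the statement that the leading-order energy is "not increasing" follows a posteriori from the sign of $\iFluid M\abs{\nablay\mu_0}^2\ge0$.
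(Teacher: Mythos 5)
Your proposal is correct and follows essentially the same route as the paper's proof: the paper likewise tests \eqref{eq:fast:CH:1} by $\mu_0$ and by $\tfrac12\rho^\prime\abs{\uflow_0}^2$ (which together amount to your combined test function), tests \eqref{eq:fast:CH:2} by $\dtaufast\phi_0$ and \eqref{eq:fast:Stokes:1} by $\uflow_0$, and relies on exactly the two cancellations you identify (the Korteweg-correction term against the density-transport remainder, and the capillary term against the convective Cahn--Hilliard term after integration by parts). Your parenthetical self-correction that no viscous term is present at this leading order is also consistent with the paper, where the dissipation in \eqref{eq:microscopicenergy} comes solely from $\iFluid M\abs{\nablay\mu_0}^2$.
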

\begin{proof}
We start by testing \eqref{eq:fast:CH:1} by $\mu_0$ and \eqref{eq:fast:CH:2} by $\dtaufast\phi_0$, which yields
\begin{align}
\dtaufast\iFluid\sigma\delta^{-1} W\trkla{\phi_0} +\dtaufast\iFluid\tfrac12\sigma\delta\abs{\nablay\phi_0}^2+\iFluid\uflow_0\cdot\nabla\phi_0\mu_0+\iFluid M\abs{\nablay\mu_0}^2=0\,,
\end{align}
due to the boundary condition \eqref{eq:fast:boundarymu}.
In a second step, we test \eqref{eq:fast:Stokes:1} by $\uflow_0$ and obtain
\begin{multline}\label{eq:energyestimate:tmp:1}
\iFluid\tfrac12\rho\trkla{\phi_0}\dtaufast\abs{\uflow_0}^2 +\iFluid\rho\trkla{\phi_0}\rkla{\rkla{\uflow_0\cdot\nablay}\uflow_0}\cdot\uflow_0 - \iFluid \rkla{\rkla{\rho^\prime M\nablay\mu_0\cdot\nablay}\uflow_0}\uflow_0\\=\iFluid\phi_0\nablay\mu_0\cdot\uflow_0\,.
\end{multline}
Testing \eqref{eq:fast:CH:1} by $\tfrac12\abs{\uflow_0}^2\rho^\prime$ provides
\begin{align}\label{eq:energyestimate:tmp:2}
\iFluid\dtaufast\rho\trkla{\phi_0}\tfrac12\abs{\uflow_0}^2+\iFluid\tfrac12\abs{\uflow_0}^2\uflow\cdot\nablay\rho\trkla{\phi_0}+\iFluid\tfrac12 \rho^\prime M\nablay\mu_0\cdot\nablay\abs{\uflow_0}^2\,.
\end{align}
Straightforward computations show
\begin{multline}\label{eq:energyestimate:tmp:3}
\iFluid\rho\trkla{\phi_0}\rkla{\rkla{\uflow_0\cdot\nablay}\uflow_0}\cdot\uflow_0 +\iFluid\tfrac12\abs{\uflow_0}^2\uflow_0\cdot\nablay\rho\trkla{\phi_0}\\
=\iFluid\tfrac12\rho\trkla{\phi_0}\uflow_0\cdot\nablay\abs{\uflow_0}^2+\iFluid\tfrac12\abs{\uflow_0}^2\uflow_0\cdot\nablay\rho\trkla{\phi_0}=0\,.
\end{multline}
and
\begin{align}
-\iFluid\rho^\prime M\rkla{\rkla{\nablay\mu_0\cdot\nablay}\uflow_0}\cdot\uflow_0+\tfrac12\iFluid\rho^\prime M\nablay\mu_0\cdot\nablay\abs{\uflow_0}^2=0\,.
\end{align}
Therefore, adding \eqref{eq:energyestimate:tmp:1}-\eqref{eq:energyestimate:tmp:3} yields the result.
\end{proof}
\begin{remark}[Existence of a pseud-stationary state]\label{rem:pseudostationary}
Lemma \ref{lem:micro:energy} in particular tells us that the energy of the cell problem will decrease until $\mu_0$ is constant with respect to $\yflow$.
As the energy of the system is nonnegative, the system will evolve towards a pseudo-stationary state characterized by $\nablay\mu_0\equiv0$.
In this state, the shape of the droplet attains a (local) optimum, while we still allow for transportation.
This observation will be crucial for the numerical treatment of the model, as it provides a practical stopping criterion for the evolution of the microscopic problems.
\end{remark}
We conclude this section by deriving an expression for the capillary pressure.
Introducing the fluid pressure $p^f:=p+\mu\phi$ with the leading order term $p_0^f=p_0+\mu_0\phi_0$ and assuming that the system already reached a pseudo-stationary state, \eqref{eq:fast:Stokes:1} reads
\begin{align}\label{eq:micro:stationary}
\rho\trkla{\phi_0}\dtaufast\uflow_0+\rho\trkla{\phi_0}\trkla{\uflow_0\cdot\nablay}\uflow_0 +\nablay p^f_0=\mu_0\nablay\phi_0\,
\end{align}
with a constant $\mu_0$.
An integration perpendicular to the level set $\Gamma:=\tgkla{\phi_0\equiv0}$ provides an expression for the pressure difference between both fluids.
In particular, we obtain
\begin{align}
\delta F\trkla{\rho\trkla{\phi_0},\uflow_0}+\restr{p_0^f}{\phi_0=1}-\restr{p_0^f}{\phi_0=-1}=2\mu_0\label{eq:pressurejump}
\end{align}
with some $F\trkla{\rho\trkla{\phi_0},\uflow_0}$ stemming from the mean value integral of the first two terms in \eqref{eq:micro:stationary}.
Assuming that this $F$ is bounded independently of $\delta$, we neglect the first term for small $\delta$ and obtain
\begin{align}
\restr{p_0^f}{\phi_0=1}-\restr{p_0^f}{\phi_0=-1}=2\mu_0\label{eq:pressurejump:pure}
\end{align}
Recalling that $2\mu_0$ is a diffuse interface approximation of $\tilde{\sigma}\kappa$ (see e.g. Section 4 in \cite{AGG2012}), we notice that the capillary pressure depends on the surface tension $\tilde{\sigma}=c_W^{-1}\sigma$ and the mean curvature $\kappa$.
At this point, we want to emphasize that $\mu_0$ is not purely determined by the saturation, i.e. $\fint_{\RefDomain}\phi_0\dy$, but also depends on the interaction of the droplets with the solid matrix.
These interaction allow for hysteresis effects in the capillary pressure -- saturation curve.
For an illustration of this effect, we refer to Section \ref{sec:numerics} of this manuscript.
\subsection{The slow-scale problem}
\label{subsec:slow}
In this section, we investigate the properties of the slow-scale equations \eqref{eq:slow}.
Similar to Lemma \ref{lem:micro:energy}, we obtain an energy estimate with respect to the slow time-scale.
\begin{lemma}[Slow energy estimate]\label{lem:slow:energy}
A solution to \eqref{eq:slow:S:1}-\eqref{eq:slow:CH:bound:2} and \eqref{eq:fast:Stokes:1}-\eqref{eq:fast:boundaryphi} satisfies
\begin{multline}\label{eq:slow:energy}
\dtauslow\iDF\sigma\delta^{-1}W\trkla{\phi_0}+\dtauslow\iDF\tfrac12\sigma\delta\abs{\nablay\phi_0}^2 +\dtauslow\iDF\tfrac12\rho\trkla{\phi_0}\abs{\uflow_0}^2\\
+\dtaufast\iDF\sigma\delta\nablax\phi_0\cdot\nablay\phi_0-\idDF\sigma\delta\nablay\phi_0\cdot\mathbf{n}\dtaufast\phi_0
+\idDF\tfrac12\rho\trkla{\phi_0}\abs{\uflow_0}^2\uflow_0\cdot\mathbf{n}\\-\idDF\tfrac12\rho^\prime M\nablay\mu_0\cdot\mathbf{n}\abs{\uflow_0}^2
+2\iDF M\nablay\mu_0\cdot\nablay\mu_1+2\iDF M\nablax\mu_0\cdot\nablay\mu_0\\
-\idDF M\nablay\mu_0\cdot\mathbf{n}\mu_0
+\iDF2\eta\trkla{\phi_0}\abs{\mathbf{D_y}\uflow_0}^2\\
+\idDF\rkla{p_0+\phi_0\mu_0}\uflow_0\cdot\mathbf{n}+\iDF\rho\trkla{\phi_0}g\mathbf{e}_3\cdot\uflow_0=0\,.
\end{multline}
\end{lemma}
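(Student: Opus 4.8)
The plan is to repeat the testing procedure behind Lemma~\ref{lem:micro:energy}, but now on the product domain $\Domain\times\Fluid$, being careful to separate the boundary contributions: those produced on $\Domain\times\dFluid$ either vanish or cancel by the homogeneous conditions \eqref{eq:fast:boundarymu}, \eqref{eq:fast:boundaryphi}, \eqref{eq:fast:boundaryu}, \eqref{eq:slow:CH:bound:1}, \eqref{eq:slow:CH:bound:2}, whereas the ones on $\partial\Domain\times\Fluid$ survive and make up the boundary terms of \eqref{eq:slow:energy}. I would begin with the phase-field part: test \eqref{eq:slow:CH:1} by $\mu_0$ and \eqref{eq:fast:CH:1} by $\mu_1$, integrate over $\Domain\times\Fluid$, and integrate the flux divergences by parts in $\yflow$ and $\xflow$. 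The mobility contributions combine into $2\iDF M\nablay\mu_0\cdot\nablay\mu_1 + 2\iDF M\nablax\mu_0\cdot\nablay\mu_0$; the $\xflow$-integration by parts of $\divx\gkla{M\nablay\mu_0}$ produces the surface term $-\idDF M\nablay\mu_0\cdot\mathbf{n}\,\mu_0$; and the remaining $\Domain\times\dFluid$-terms drop out by \eqref{eq:fast:boundarymu} and \eqref{eq:slow:CH:bound:1}. It then remains to recast the two time-derivative terms: substituting \eqref{eq:fast:CH:2} into $\iDF\dtauslow\phi_0\,\mu_0$ and integrating $\Delta_{\mb{y}}$ by parts (using \eqref{eq:fast:boundaryphi}) gives $\dtauslow\iDF\sigma\delta^{-1}W\trkla{\phi_0}+\dtauslow\iDF\tfrac12\sigma\delta\abs{\nablay\phi_0}^2$, while substituting \eqref{eq:slow:CH:2} into $\iDF\dtaufast\phi_0\,\mu_1$, integrating by parts in $\xflow$ and $\yflow$ (using \eqref{eq:slow:CH:bound:2}) and exploiting that $\dtaufast$ commutes with $\nablax$ and $\nablay$ gives $\dtaufast\iDF\sigma\delta\nablax\phi_0\cdot\nablay\phi_0-\idDF\sigma\delta\nablay\phi_0\cdot\mathbf{n}\,\dtaufast\phi_0$.

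Next I would test the momentum balance \eqref{eq:slow:S:1} by $\uflow_0$ over $\Domain\times\Fluid$. The viscous term produces $\iDF 2\eta\trkla{\phi_0}\abs{\mathbf{D_y}\uflow_0}^2$ (its surface term vanishes by \eqref{eq:fast:boundaryu}); $\nablay p_1$ disappears by \eqref{eq:fast:Stokes:2} and \eqref{eq:fast:boundaryu}; $\nablax p_0$ leaves only $\idDF p_0\,\uflow_0\cdot\mathbf{n}$ after using \eqref{eq:slow:S:2}; and integrating the capillary forces $-\phi_0\nablay\mu_1-\phi_0\nablax\mu_0$ by parts gives $\iDF\mu_1\,\uflow_0\cdot\nablay\phi_0+\iDF\mu_0\,\uflow_0\cdot\nablax\phi_0-\idDF\phi_0\mu_0\,\uflow_0\cdot\mathbf{n}$, once more using $\divy\uflow_0=\divx\uflow_0=0$ and \eqref{eq:fast:boundaryu}. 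The delicate point is the four transport terms $\rkla{\rkla{\rho\trkla{\phi_0}\uflow_0-\rho^\prime M\nablay\mu_0}\cdot\nablax}\uflow_0$, $-\rkla{\rho^\prime M\nablax\mu_0\cdot\nablay}\uflow_0$ and $-\rkla{\rho^\prime M\nablay\mu_1\cdot\nablay}\uflow_0$: rewriting each $\rkla{\mb{a}\cdot\nabla}\uflow_0\cdot\uflow_0=\mb{a}\cdot\nabla\tfrac12\abs{\uflow_0}^2$ and integrating by parts (in $\xflow$ for the $\nablax$-fluxes, in $\yflow$ for the $\nablay$-fluxes, with the $\Domain\times\dFluid$-parts vanishing since $\uflow_0=0$ there), the divergences of the three mobility fluxes collapse, by the slow Cahn--Hilliard equation \eqref{eq:slow:CH:1}, to $\iDF\tfrac12\rho^\prime\abs{\uflow_0}^2\rkla{\dtauslow\phi_0+\uflow_0\cdot\nablax\phi_0}$, while the surviving boundary terms are exactly $\idDF\tfrac12\rho\trkla{\phi_0}\abs{\uflow_0}^2\uflow_0\cdot\mathbf{n}-\idDF\tfrac12\rho^\prime M\nablay\mu_0\cdot\mathbf{n}\abs{\uflow_0}^2$. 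Since $\rho$ is affine in $\phi$, i.e.\ $\rho^\prime\nabla\phi_0=\nabla\rho\trkla{\phi_0}$, the $\uflow_0\cdot\nablax\phi_0$-part cancels the term created by rewriting $\iDF\tfrac12\rho\trkla{\phi_0}\dtauslow\abs{\uflow_0}^2=\dtauslow\iDF\tfrac12\rho\trkla{\phi_0}\abs{\uflow_0}^2-\iDF\tfrac12\rho^\prime\dtauslow\phi_0\abs{\uflow_0}^2$, and the two $\dtauslow\phi_0$-parts cancel as well, so that the kinetic block reduces to $\dtauslow\iDF\tfrac12\rho\trkla{\phi_0}\abs{\uflow_0}^2$ plus those two boundary integrals.

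Adding the phase-field identity to the momentum identity, the advective cross terms $\iDF\mu_0\,\uflow_0\cdot\nablax\phi_0$ and $\iDF\mu_1\,\uflow_0\cdot\nablay\phi_0$ cancel, the surviving pressure and capillary surface terms merge into $\idDF\rkla{p_0+\phi_0\mu_0}\uflow_0\cdot\mathbf{n}$, and what is left is precisely \eqref{eq:slow:energy}. I expect the main obstacle to be exactly this bookkeeping in the momentum test: correctly distinguishing the boundary integrals that live on $\Domain\times\dFluid$ (and hence vanish or cancel through \eqref{eq:fast:boundarymu}, \eqref{eq:slow:CH:bound:1}, \eqref{eq:slow:CH:bound:2}, \eqref{eq:fast:boundaryu}) from those that survive on $\partial\Domain\times\Fluid$, and verifying the sign-sensitive cancellation of the $\dtauslow\phi_0$- and transport-contributions, which hinges on the affine linearity of $\rho$ together with \eqref{eq:slow:CH:1}.
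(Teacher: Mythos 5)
Your proposal is correct and follows essentially the same route as the paper: test \eqref{eq:slow:CH:1} by $\mu_0$ and (in effect) by $\tfrac12\rho^\prime\abs{\uflow_0}^2$, test \eqref{eq:fast:CH:1} by $\mu_1$, substitute the definitions \eqref{eq:fast:CH:2} and \eqref{eq:slow:CH:2} against the respective time derivatives, test \eqref{eq:slow:S:1} by $\uflow_0$, and sum, keeping the $\partial\Domain\times\Fluid$ boundary terms while the $\Domain\times\dFluid$ ones vanish. The only differences are presentational (you rewrite the advective terms and invoke \eqref{eq:slow:CH:1} to collapse the mobility divergences instead of phrasing this as a separate test, and you carry out the final $\xflow$-integrations by parts explicitly rather than after integrating the local identity over $\Omega$), so the argument matches the paper's proof.
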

\begin{remark}
The identity stated in Lemma \ref{lem:slow:energy} includes boundary terms on $\partial\Omega$ describing the interactions with adjacent domains.
Under the assumption that the considered system is closed, i.e. \eqref{eq:fast:boundaryu}, \eqref{eq:fast:boundarymu}, and \eqref{eq:fast:boundaryphi} also hold true on $\partial\Omega\times\Fluid$, Lemma \ref{lem:micro:energy} and Lemma \ref{lem:slow:energy} provide the first two orders of the expansion of the energy balance equation \eqref{eq:originalenergy}.
Therefore, the derived multiscale model is still consistent with the original energetic description.
\end{remark}
\begin{proof}[Proof of Lemma \ref{lem:slow:energy}]
We start by testing \eqref{eq:slow:CH:1} by $\mu_0$ and $\rho^\prime\abs{\uflow_0}^2$ to obtain
\begin{subequations}
\begin{multline}\label{eq:slow:energy:CH1}
\iFluid\dtauslow\phi_0\mu_0 +\iFluid\divx\gkla{\uflow_0\phi_0}\mu_0 +\iFluid M\nablay\mu_0\cdot\nablay\mu_1 \\+\iFluid M\nablax\mu_0\cdot\nablay\mu_0 -\iFluid\divx\gkla{ M\nablay\mu_0}\mu_0=0\,,
\end{multline}
\begin{multline}\label{eq:slow:energy:CH2}
\iFluid\dtauslow\rho\trkla{\phi_0}\abs{\uflow_0}^2+\iFluid\uflow_0\cdot\nablax\rho\trkla{\phi_0}\abs{\uflow_0}^2+\iFluid\rho^\prime M\nablay\mu_1\cdot\nablay\abs{\uflow_0}^2 \\
+\iFluid \rho^\prime M\nablax\mu_0\cdot\nablay\abs{\uflow_0}^2 -\iFluid\divx\gkla{ \rho^\prime M \nablay\mu_0}\abs{\uflow_0}^2=0\,.
\end{multline}
\end{subequations}
Testing \eqref{eq:fast:CH:1} by $\mu_1$ provides
\begin{align}\label{eq:slow:energy:CH3}
\iFluid\dtaufast\phi_0\mu_1-\iFluid\uflow_0\phi_0\cdot\nablay\mu_1+\iFluid M\nablay\mu_0\cdot\nablay\mu_1=0\,.
\end{align}
Next, we test  \eqref{eq:slow:CH:2} by $\dtauslow\phi_0$ and \eqref{eq:fast:CH:2} by $\dtaufast\phi_0$ which yields
\begin{align}
\iFluid \mu_0\dtauslow\phi_0&=\iFluid \sigma\delta^{-1} \dtauslow W\trkla{\phi_0}+\iFluid\tfrac12\sigma\delta\dtauslow\abs{\nablay\phi_0}^2\,,\label{eq:slow:energy:mu0}\\
\iFluid\mu_1\dtaufast\phi_0&=-\iFluid\sigma\delta\divy\gkla{\nablax\phi_0}\dtaufast\phi_0-\iFluid\sigma\delta\divx\gkla{\nablay\phi_0}\dtaufast\phi_0\label{eq:slow:energy:mu1}\,.
\end{align}
Finally, we test \eqref{eq:slow:S:1} by $\uflow_0$ and obtain
\begin{multline}\label{eq:slow:energy:u}
\iFluid\rho\trkla{\phi_0}\tfrac12\dtauslow\abs{\uflow_0}^2+\iFluid \rkla{\nablax\uflow_0\cdot\rkla{\rho\trkla{\phi_0}\uflow_0-\rho^\prime M\nablay\mu_0}}\cdot\uflow_0\\
 -\iFluid \rkla{\nablay\uflow_0\cdot\trkla{\rho^\prime M\nablax\mu_0}}\uflow_0
 -\iFluid\rkla{\nablay\uflow_0\cdot\trkla{\rho^\prime M\nablay\mu_1}}\cdot\uflow_0 +\iFluid 2\eta\trkla{\phi_0}\abs{\mathbf{D_y}\uflow_0}^2 \\+\iFluid\uflow_0\cdot\nablax p_0
=-\iFluid\phi_0\nablay\mu_1\cdot\uflow_0-\iFluid\phi_0\nablax\mu_0\cdot\uflow_0 -\iFluid\rho\trkla{\phi_0} g\mathbf{e}_3\,.
\end{multline}
Summing \eqref{eq:slow:energy:CH1}, $\tfrac12\eqref{eq:slow:energy:CH2}$, \eqref{eq:slow:energy:CH3}, and \eqref{eq:slow:energy:u} and plugging in \eqref{eq:slow:energy:mu0} and \eqref{eq:slow:energy:mu1} yields
\begin{multline}
\dtauslow\iFluid\sigma\delta^{-1}W\trkla{\phi_0}+\dtauslow\iFluid\tfrac12\sigma\delta\abs{\nablay\phi_0}^2 +\dtauslow\iFluid\tfrac12\rho\trkla{\phi_0}\abs{\uflow_0}^2\\
-\iFluid\sigma\delta\divy\gkla{\nablax\phi_0}\dtaufast\phi_0-\iFluid\sigma\delta\divx\gkla{\nablay\phi_0}\dtaufast\phi_0\\
+\iFluid \tfrac12\rkla{\rho\trkla{\phi_0}\uflow_0-\rho^\prime M\nablay\mu_0}\cdot\nablax\abs{\uflow_0}^2+\iFluid\tfrac12\uflow_0\cdot\nablax\rho\trkla{\phi_0}\abs{\uflow_0}^2\\
-\iFluid\tfrac12\divx\gkla{\rho^\prime M\nablay\mu_0}\abs{\uflow_0}^2+2\iFluid M\nablay\mu_0\cdot\nablay\mu_1\\
+\iFluid M\nablax\mu_0\cdot\nablay\mu_0-\iFluid\divx\gkla{M\nablay\mu_0}\mu_0+\iFluid2\eta\trkla{\phi_0}\abs{\mathbf{D_y}\uflow_0}^2\\
+\iFluid \uflow_0\cdot\nablax p_0+\iFluid\nablax\rkla{\phi_0\mu_0}\cdot\uflow_0+\iFluid\rho\trkla{\phi_0}g\mathbf{e}_3\cdot\uflow_0=0\,.\label{eq:energy:slow:local}
\end{multline}
As \eqref{eq:energy:slow:local} is valid at every macroscopic point $\xflow\in\Omega$, an integration over the macroscopic domain yields the result.
\end{proof}
\subsection{Macroscopic equations}\label{subsec:macro}
In this section, we will use the slow-scale equations discussed in the last section to derive macroscopic models.
In the first part of this section, we derive a macroscopic model based on the previously used diffuse interface frame work.
In the second part, we formally derive an equivalent sharp interface version of the macroscopic equations.
This will help us later in the comparison with other established models.

\subsubsection{Macroscopic equations for the diffuse interface model}
According to the estimate derived in Lemma \ref{lem:micro:energy}, the system will evolve towards a pseudo-stationary state with $\nablay\mu_0=0$ on the fast time scale.
Therefore, we may safely assume $\nablay\mu_0=0$ when dealing only with the slow time scale.
Integrating \eqref{eq:slow:CH:1} over the fluidic domain $\Fluid$ and applying the boundary condition \eqref{eq:slow:CH:bound:1} provides the balance equation
\begin{align}\label{eq:macro:balance}
\dtauslow\iFluid \phi_0 + \divx\iFluid\uflow_0\phi_0=0\,.
\end{align}
Recalling that the wetting phase is indicated by $\phi=+1$, we define its saturation by $\Satw= \tfrac12\fint_{\Fluid}\phi_0+\tfrac12\in\tekla{0,1}$.
Analogously, the saturation of the non-wetting phase is given by $\Satn= -\tfrac12\fint_{\Fluid}\phi_0+\tfrac12\in\tekla{0,1}$.
Obviously, the saturations satisfy $\Satw+\Satn=1$ (cf. \eqref{eq:twophase:saturations}).
Noting $\abs{\RefDomain}=1$ and $\divx\uflow_0=0$, we may define the velocities $\vwd$ and $\vnd$ as
\begin{align}\label{eq:def:macrovel}
\vwd&=\iFluid\uflow_0\trkla{\tfrac12\phi_0+\tfrac12}=\iFluid\uflow_0\charwdelta\,, &\vnd&=\iFluid\uflow_0\trkla{-\tfrac12\phi_0+\tfrac12}=\iFluid\uflow_0\charndelta\,.
\end{align}
Here, $\charndelta:=\trkla{-\tfrac12\phi_0+\tfrac12}$ is a diffuse interface approximation of the characteristic function $\charn$ of $\Fluid_n$, the domain of the non-wetting phase associated with $\phi_0=-1$, and $\charwdelta:=\trkla{\tfrac12\phi_0+\tfrac12}$ is a diffuse interface approximation of the characteristic function $\charw$ of the domain $\Fluid_w$ of the wetting phase.
As both fluids are assumed to be incompressible, the fluid density is constant in the pure phases.
Consequently, \eqref{eq:macro:balance} provides
\begin{subequations}\label{eq:saturations}
\begin{align}
\zeta\dtauslow\trkla{\rhow\Satw}+\divx\gkla{\vwd\rhow}=0\,,\\
\zeta\dtauslow\trkla{\rhon\Satn}+\divx\gkla{\vnd\rhon}=0\,
\end{align}
\end{subequations}
with the porosity $\zeta:=\tabs{\Fluid}/\tabs{\RefDomain}=\tabs{\Fluid}$.
Hence, we have recovered \eqref{eq:darcy:sat} for the case absent source and sink terms.
Unfortunately, the evolution of $\uflow_0$ is prescribed by the nonlinear equations \eqref{eq:slow:S:1}, \eqref{eq:fast:Stokes:2}, and \eqref{eq:slow:S:2} and it is not possible to obtain a purely macroscopic, linear relation like \eqref{eq:darcy:vel} for $\vwd$ and $\vnd$ without imposing additional assumptions.

To simplify the equations even further, we will use Assumption \ref{item:creepingflow} which were used for the derivation of the two-phase Darcy law \eqref{eq:darcy:general}.
Assuming a known interface and applying \ref{item:creepingflow} and the linearity of $\rho$ and $\eta$, we obtain
\begin{multline}\label{eq:macro:1}
-\divy\gkla{2\etan\charndelta\mb{D_y}\uflow_0} -\divy\gkla{2\etaw\charwdelta\mb{D_y}\uflow_0}
+\nablay\ekla{\charndelta p_1}+\nablay\ekla{\charwdelta p_1}\\
=-\charndelta\nablax p_0-\charwdelta\nablax p_0 +\charndelta\nablax\mu_0-\charwdelta\nablax\mu_0
-\charndelta\rhon g\mb{e}_3 -\charwdelta\rhow g\mb{e}_3 -\phi_0\nablay\mu_1\,.
\end{multline}
Analyzing the right-hand side of \eqref{eq:macro:1}, we note that the velocity $\uflow_0$ is governed by forces acting only on the wetting fluid, forces acting on the non-wetting fluid, and the term $-\phi_0\nabla\mu_1$, which, as we will elucidate later, relates to the interactions at the fluid-fluid interface.
According to Lemma \ref{lem:micro:energy}, we can safely assume that $\mu_0$ is independent of $\yflow$.
Neglecting the inertial terms in \eqref{eq:micro:stationary}, we obtain
\begin{align}
\nablay p_0= -\phi_0\nablay\mu_0=0\,,
\end{align}
which makes also $p_0$ independent of the microscopic variable.
Therefore, we will use the auxiliary problems
\begin{subequations}\label{eq:auxn}
\begin{multline}
-\divy\gkla{2\etan\charndelta\mb{D_y}\mb{w}_k^{\delta,n}} -\divy\gkla{2\etaw\charwdelta\mb{D_y}\mb{w}_k^{\delta,n}}
\\+\nablay\ekla{\charndelta \pi_k^{\delta,n}}+\nablay\ekla{\charwdelta \pi_k^{\delta,n}}= \charndelta \mb{e}_k\,,
\end{multline}
\begin{align}
\divy\mb{w}_k^{\delta,n}=&0\,,
\end{align}
\end{subequations}
\begin{subequations}\label{eq:auxw}
\begin{multline}
-\divy\gkla{2\etan\charndelta\mb{D_y}\mb{w}_k^{\delta,w}} -\divy\gkla{2\etaw\charwdelta\mb{D_y}\mb{w}_k^{\delta,w}}
\\+\nablay\ekla{\charndelta \pi_k^{\delta,w}}+\nablay\ekla{\charwdelta \pi_k^{\delta,w}}= \charwdelta \mb{e}_k\,,
\end{multline}
\begin{align}
\divy\mb{w}_k^{\delta,w}=&0\,
\end{align}
\end{subequations}
with homogeneous Dirichlet boundary conditions on $\partial\Fluid$ and periodic boundary conditions on $\partial\RefDomain$ to compute the velocity components depending on $\rhow g\mb{e}_3$, $\rhon g \mb{e}_3$, $p_0$, and $\mu_0$.
In addition, we need to solve
\begin{subequations}\label{eq:auxint}
\begin{align}
-\divy\gkla{2\etan\charndelta\mb{D_y}\mb{w}_k^{\delta,\text{int}}} -\divy\gkla{2\etaw\charwdelta\mb{D_y}\mb{w}_k^{\delta,\text{int}}}
+\nablay\ekla{\charndelta \pi_k^{\delta,\text{int}}}&+\nablay\ekla{\charwdelta \pi_k^{\delta,\text{int}}}\\
=& -\phi_0\nabla\mu_1\nonumber\,,\\
\divy\mb{w}_k^{\delta,w}=&0\,,
\end{align}
\end{subequations}
to include the interfacial effects.
Therefore, we may write the fluid velocity $\uflow_0$ as
\begin{align}\label{eq:simplevel}
\uflow_0= \bs{K}^{\delta,n}\trkla{-\nablax p_0+\nablax\mu_0-\rhon g\mb{e}_3} + \bs{K}^{\delta,w}\trkla{-\nablax p_0-\nablax\mu_0-\rhow g\mb{e}_3} + \mb{w}^{\delta,\text{int}}\,,
\end{align}
with $\bs{K}^{\delta,n}:=\trkla{\bs{w}_1^{\delta,n},...,\bs{w}_d^{\delta,n}}$ and $\bs{K}^{\delta,w}:=\trkla{\bs{w}_1^{\delta,w},...,\bs{w}_d^{\delta,w}}$.
In order to obtain expressions for the macroscopic velocities $\vnd$ and $\vwd$ defined in \eqref{eq:def:macrovel}, we multiply \eqref{eq:simplevel} by $\charndelta$ (or $\charwdelta$, respectively) and integrate over $\Fluid$.
Hence, we have
\begin{subequations}\label{eq:def:macrovel:2}
\begin{multline}
\vnd=\bs{\Lambda}^{\delta,n}_n\trkla{-\nablax p_0+\nablax\mu_0-\rhon g\mb{e}_3} \\+ \bs{\Lambda}^{\delta,w}_n\trkla{-\nablax p_0-\nablax\mu_0-\rhow g\mb{e}_3} + \iFluid\charndelta\mb{w}^{\delta,\text{int}}\,,
\end{multline}
\begin{multline}
\vwd=\bs{\Lambda}^{\delta,n}_w\trkla{-\nablax p_0+\nablax\mu_0-\rhon g\mb{e}_3} \\+ \bs{\Lambda}^{\delta,w}_w\trkla{-\nablax p_0-\nablax\mu_0-\rhow g\mb{e}_3} + \iFluid\charwdelta\mb{w}^{\delta,\text{int}}\,,
\end{multline}
\end{subequations}
with $\bs{\Lambda}^{\delta,n}_n:=\iFluid\charndelta\bs{K}^{\delta,n}$, $\bs{\Lambda}^{\delta,w}_n:=\iFluid\charndelta\bs{K}^{\delta,w}$, $\bs{\Lambda}^{\delta,n}_w:=\iFluid\charwdelta\bs{K}^{\delta,n}$, and $\bs{\Lambda}^{\delta,w}_w:=\iFluid\charwdelta\bs{K}^{\delta,w}$.

\subsubsection{Macroscopic equations for sharp interface models}
In order to compare our newly derived model with already established ones like \eqref{eq:twophasedarcy} and \eqref{eq:darcy:general}, we transform it into a sharp interface model and explain the meaning of $-\phi_0\nabla\mu_1$.
First, we rewrite this term as $-\nablay\trkla{\phi_0\mu_1}+\mu_1\nablay\phi_0$.
The first term allows us to exchange $p_1$ by the fluid pressure $p_1^f=p_1+\phi_0\mu_1=\charwdelta\trkla{p_1+\mu_1}+\charndelta\trkla{p_1-\mu_1}$.
Concerning the second term, we write $\nablay\phi_0$ as $\abs{\nablay\phi_0}\mb{n}_n$ with the unit vector $\mb{n}_n:=\tfrac1{\tabs{\nablay\phi_0}}\nablay\phi_0$.
From the coarea formula (cf. \cite{Federer1959}), we obtain that for every function $f$
\begin{align}\label{eq:coarea:1}
\begin{split}
\iOmega f\trkla{\xflow}\abs{\nabla\phi\trkla{\xflow}}\text{d}\xflow&=\int_{-1}^{+1} \int_{\tgkla{\xflow\,:\,\phi\trkla{\xflow}=a}} f\trkla{\xflow} \text{d}H^{n-1}\trkla{\xflow} \text{d}a\\
&=2\fint_{-1}^{+1} \int_{\tgkla{\xflow\,:\,\phi\trkla{\xflow}=a}} f\trkla{\xflow} \text{d}H^{n-1}\trkla{\xflow} \text{d}a
\end{split}
\end{align}
Standard results for the sharp interface limit $\delta\searrow0$ (see e.g. \cite{EckGarckeKnabner_engl}) show that the $\phi$-profile perpendicular to the interface is approximately $\operatorname{tanh}\trkla{\tfrac{\text{sdist}\trkla{\xflow}}{\delta\sqrt{2}}}$, where $\text{sdist}\trkla{\xflow}$ denotes the signed distance of $\xflow$  to the interface given by $\Gamma:=\tgkla{\xflow\in\Omega\,:\,\phi_0\trkla{\xflow}=0}$.
Therefore, we obtain that the last line in \eqref{eq:coarea:1} converges towards
\begin{align}
\int_\Gamma f\trkla{\xflow}\text{d}H^{n-1}\trkla{\xflow}\,.
\end{align}
Therefore, $\tfrac12\tabs{\nablay\phi_0}$ can be seen as an approximation of the surface dirac function $\delta_\Gamma$ for the fluid-fluid interface $\Gamma$.
Recalling that $2\mu$ is a diffuse interface approximation of $\tilde{\sigma}\kappa$, we interpret $2\mu_1$ as the $\mathcal{O}\trkla{\eps}$-term of asymptotic expansion of the interfacial forces.
Replacing $\charndelta$, $\charwdelta$, and $\tabs{\nabla\phi}$ by $\charn$, $\charw$, and $\delta_\Gamma$ we obtain from \eqref{eq:macro:1}
\begin{multline}
\iFluid2\etan\charn\mb{D_y}\uflow_0:\mb{Dw}+\iFluid2\etaw\charw\mb{D_y}\uflow_0:\mb{Dw}-\iFluid\charn \trkla{p_1-\mu_1}\divy\mb{w}\\
-\iFluid\charw \trkla{p_1+\mu_1}\divy\mb{w}=-\iFluid\charn\nablax p_0\cdot\mb{w}-\iFluid\charw\nablax p_0\cdot\mb{w}
 +\iFluid\charn\nablax\mu_0\cdot\mb{w}\\-\iFluid\charw\nablax\mu_0\cdot\mb{w}
-\iFluid\charn\rhon g\mb{e}_3\cdot\mb{w} -\iFluid\charw\rhow g\mb{e}_3\cdot\mb{w} +\int_{\Gamma}2\mu_1\mb{n}_n\cdot\mb{w}\,,
\end{multline}
for any test function $\mb{w}$ vanishing on $\dFluid$.
This gives rise to the following set of equations
\begin{subequations}\label{eq:macro:stokes}
\begin{align}
-\divy\gkla{2\etan\mb{D_y}\uflow_0}+\nablay \ekla{p_1-\mu_1} &=-\nablax p_0 +\nablax\mu_0-\rhon g\mb{e}_3 &&\text{in } \Fluid_n\,,\label{eq:macro:stokes:n}\\
-\divy\gkla{2\etaw\mb{D_y}\uflow_0} +\nablay\ekla{p_1+\mu_1}&=-\nablax p_0-\nablax\mu_0-\rhow g\mb{e}_3 &&\text{in }\Fluid_w\,,\label{eq:macro:stokes:w}
\end{align}
where $\uflow_0$ is continuous across the fluid-fluid interface $\Gamma$ and satisfies the interface condition
\begin{align}
\jump{\trkla{p_1^f\mathds{1}-2\eta\mb{D_y}\uflow_0}\cdot\mb{n}_n}=2\mu_1\mb{n}_n\,.\label{eq:macro:stokes:jump}
\end{align}
\end{subequations}
Here, we kept the symbol $\mu_1$ for the $\mathcal{O}\trkla{\eps}$-term of asymptotic expansion of the interfacial forces in the sharp interface case.
Plugging the asymptotic expansions \eqref{eq:expansion:u}, \eqref{eq:expansion:phi}, \eqref{eq:expansion:mu}, and \eqref{eq:expansion:p} in the jump condition from \ref{item:jumpcondition}, we note that the terms of leading order $\mathcal{O}\trkla{1}$ correspond to \eqref{eq:pressurejump:pure}, while the order $\mathcal{O}\trkla{\varepsilon}$ provides \eqref{eq:macro:stokes:jump}.

Recalling the definition of $p_1^f$, the jump condition \eqref{eq:macro:stokes:jump} translates to
\begin{align}
\jump{\trkla{p_1\mathds{1}-2\eta\mb{D_y}\uflow_0}\cdot\mb{n}_n}=0\,.
\end{align}
This suggests that the cell problem \eqref{eq:auxint} has to be seen as an artifact of the diffuse interface description with vanishing influence on the velocity for $\delta\searrow0$ and therefore can be neglected.
Hence we may compute the effective macroscopic properties by using $\RefDomain$-periodic auxiliary problems similar to \eqref{eq:auxn} and \eqref{eq:auxw}.
These problems read
\begin{subequations}\label{eq:aux:nsharp}
\begin{align}
-\divy\gkla{2\etan\mb{D}\mb{w}^n_k}+\nablay\pi_k^n&=\mb{e}_k&&\text{in }\Fluid_w\,,\\
\divy\mb{w}^n_k &=0&&\text{in }\Fluid_w\,,\\
-\divy\gkla{2\etaw\mb{D}\mb{w}^n_k}+\nablay\pi_k^n&=0&&\text{in }\Fluid_n\,,\\
\divy\mb{w}^n_k &=0&&\text{in }\Fluid_n\,,\\
\mb{w}^n_k&=0&&\text{on }\dFluid\,,
\end{align}
with $\mb{w}^n_k$ being continuous across the interface and satisfying the jump condition
\begin{align}
\jump{\trkla{\pi^n_k\mathds{1}-2\eta\mb{D_y}\mb{w}^n_k}\cdot\mb{n}_n}=0
\end{align}
on the fluid-fluid interface $\Gamma$, and
\end{subequations}
\begin{subequations}\label{eq:aux:wsharp}
\begin{align}
-\divy\gkla{2\etan\mb{D}\mb{w}^w_k}+\nablay\pi_k^w&=0&&\text{in }\Fluid_w\,,\\
\divy\mb{w}^w_k &=0&&\text{in }\Fluid_w\,,\\
-\divy\gkla{2\etaw\mb{D}\mb{w}^w_k}+\nablay\pi_k^w&=\mb{e}_k&&\text{in }\Fluid_n\,,\\
\divy\mb{w}^w_k &=0&&\text{in }\Fluid_n\,,\\
\mb{w}^w_k&=0&&\text{on }\dFluid\,,
\end{align}
with $\mb{w}^w_k$ being continuous across the interface and satisfying the jump condition
\begin{align}
\jump{\trkla{\pi^w_k\mathds{1}-2\eta\mb{D_y}\mb{w}^w_k}\cdot\mb{n}_n}=0\,.
\end{align}
\end{subequations}
Similar to \eqref{eq:def:macrovel}, we obtain the macroscopic fluid velocities as
\begin{subequations}\label{eq:def:macrovel:2}
\begin{align}
\vn&=\bs{\Lambda}^{n}_n\trkla{-\nablax p_0+\nablax\mu_0-\rhon g\mb{e}_3} + \bs{\Lambda}^{w}_n\trkla{-\nablax p_0-\nablax\mu_0-\rhow g\mb{e}_3} \,,\\
\vw&=\bs{\Lambda}^{n}_w\trkla{-\nablax p_0+\nablax\mu_0-\rhon g\mb{e}_3} + \bs{\Lambda}^{w}_w\trkla{-\nablax p_0-\nablax\mu_0-\rhow g\mb{e}_3}\,,
\end{align}
\end{subequations}
with $\bs{\Lambda}^{n}_n:=\iFluidn\trkla{\bs{w}_1^n,...,\bs{w}_d^n}$, $\bs{\Lambda}^{w}_n:=\iFluidn\trkla{\bs{w}_1^w,...,\bs{w}_d^w}$, $\bs{\Lambda}^{n}_w:=\iFluidw\trkla{\bs{w}_1^n,...,\bs{w}_d^n}$, and $\bs{\Lambda}^{w}_w:=\iFluidw\trkla{\bs{w}_1^w,...,\bs{w}_d^w}$.
\section{Comparison with existing models}  \label{sec:comparison_models}
In this section, we will compare the upscaled models derived in the previous section with other models previously established in literature.
Thereby, we will consider the well-established generalized Darcy filtration law \eqref{eq:twophasedarcy} (see e.g. Chapter 5.2 in \cite{Hornung1997}) as well as recent developments based on diffuse interface descriptions.
\subsection{Periodic homogenization, fixed interface}
In Chapter 5 in \cite{Hornung1997}, several upscaled models for two-phase flow in porous media are presented.
Starting from a sharp interface cell problem which is similar to \eqref{eq:macro:stokes:n}-\eqref{eq:macro:stokes:w} and using the assumptions \ref{item:contactangle}-\ref{item:capillarypressure}, a generalized Darcy model with velocities \eqref{eq:darcy:general} is derived.
Although, the underlying equations seem to be similar, the resulting models differ in some important properties.\\
First of all, in our model the position of the interface is not treated as given, but computed by solving the fast scaled system \eqref{eq:fast}.
Consequently, the interface position depends not only on the saturation but also on the history of its evolution.
This allows us to explain the hysteresis in the relation between the capillary pressure and the saturation (see Section \ref{sec:numerics} for a more detailed investigation).\\
Secondly, we relax the assumption of a stationary interface (cf. \ref{item:stationaryinterface}) to the assumption of a known interface.
This allows us to consider transport of detached droplets in the reference cell by e.g. averaging the effective coefficients over time.
The downside is that we have to look for $\taufast$-periodic behavior to compute the macroscopic quantities based on different droplet topologies.
On the other hand, refraining from this assumption also has some remarkable advantages.
If the fluid-fluid interface is fixed, clogging phenomena may occur:
When fluid phase does not build a continuous channel through the cell $\RefDomain$, the averaged velocity of this phase has to vanish.
As forming a continuous channel through the cell requires a minimal saturation value and a specific microscopic fluid topology, this phenomenon occurs for small saturation values.
On the first glance, this might remind of the well-established concept of an irreducible saturation for which the permeability vanishes (cf. \cite{Hornung1997}).
However, this phenomenon is not based on adhesive forces gluing the wetting phase to the solid matrix, but on an artificial fixture of detached droplets.
By dropping this assumption, we allow for transport of detached droplets in $\RefDomain$, and thereby also for fluid transport for arbitrary saturations.
However, a droplet of the wetting phase may still stick to the solid matrix and resist the macroscopic forces.
Therefore, our model still allows for an irreducible saturation value.\\
The macroscopic model \eqref{eq:twophasedarcy} discussed e.g. in Chapter 5.2 in \cite{Hornung1997} is a further simplification, as it neglects the dependence of one phase on the forces acting on the other phase, which is caused by the continuity of the tangential component of the velocities across the the interface.

\subsection{Periodic homogenization, diffuse interface approach}
In the recent years, several upscaling approaches based on diffuse interface models for two-phase flow were published.
In \cite{DalyRoose15}, Daly and Roose combined the instationary Cahn--Hilliard equation with the stationary Stokes equations to describe the microscopic behavior of two-phase flow.
They identified three different time-scales -- namely a fast time-scale for the equilibration of the fluid-fluid interface, a medium time-scale for the transport on the fine spatial scale, and a large time-scale for the macroscopic transport.
They also derived relation \eqref{eq:pressurejump:pure} for the capillary pressure and provided numerical simulations showing hysteresis effects and discontinuities in the capillary pressure -- saturation relation.
Furthermore, they were also able to connect their cell problem to the one used in Chapter 5 in \cite{Hornung1997}.
However, in the upscaling process, they also used various assumptions that are not necessary in our approach.
For instance, they assumed that phase-field parameter $\phi_0$ can be written as the sum of the macroscopic saturation and a modulated part with average zero which depends only on the microscopic coordinate.
As a consequence, either the saturation is constant or $\phi_0$ will leave the physically meaningful interval $\tekla{-1,+1}$.
Using the standard expansion in terms of $\varepsilon$ for $\phi$ and $\uflow$ (in contrast to the expansion in terms $\varepsilon^2$ used in this publications), they obtained one phase-field equation comprising time-derivatives with respect to different time-scales.
Assuming that a time-independent mean value $\iFluid\phi_0$ implies a stationary state for $\phi_0$, they ended up with a cell problem based on a stationary Cahn--Hilliard equation.
As the stationary Cahn--Hilliard equation does not allow for a unique solution,
we keep the time-derivative in \eqref{eq:slow:CH:1} and omit the ambiguity of possible stationary states by connecting them to the initial conditions.\\

\subsection{Volume averaging, fixed interface}
An early example is \cite{Whitaker1986}. The author starts form stationary Stokes flow and under various conditions arrives at \eqref{eq:darcy:sat} and a variant of \eqref{eq:darcy:general}

\subsection{Volume averaging, diffuse interface approach}
In \cite{Chen2018} an instationary Navier--Stokes--Cahn--Hilliard system with slightly more complex boundary conditions, which we have omitted for technical reasons, but otherwise similar to our approach, is taken as starting point and volume averaging is applied.
One fundamental advantage of our approach seems to be the availability of the energy estimate.
This estimate not only provides the existence of a pseudo-stationary state, but may also serve as starting point for future analytical treatment of the model.\\
Furthermore, several simplifications are applied in \cite{Chen2018}, which partially seem to be critical:
The application of the mass conservation law in Section 3.5 in \cite{Chen2018} implies $\Delta\mu=0$, i.e. a pseudo-stationary state is assumed without the justification from the energy estimate.
Similar to our approach, they assume that the Reynolds number is small enough to allow for a reduction to the Stokes equations.
In contrast to our approach, only one time scale is considered in \cite{Chen2018}.
Consequently, the time step size in numerical simulations of the macroscopic behavior has to be chosen small enough to capture the rapid changes in the (averaged) interface curvature.
A second simplification is the negligence of interactions between the two phases ((3.50) to (3.51) there) resulting in Darcy relations of the form \eqref{eq:darcy:vel} instead of \eqref{eq:def:macrovel:2}.\\
Generally spoken, both approaches aim at different length scales, i.e. macroscopically still visible droplets in \cite{Chen2018} and macroscopically miscible fluids in our approach.

\section{Numerical treatment}\label{sec:numerics}
In this section, we will discuss an algorithm for the treatment of the newly derived model and employ numerical simulations to investigate the cell problem discussed in Section \ref{subsec:micro} with respect to capillary pressure -- saturation relation. \\
To obtain the evolution of the saturation, we have to solve the macroscopic equations \eqref{eq:saturations} and \eqref{eq:def:macrovel:2}.
These equations include effective parameters which must be obtained by solving cell problems in each macroscopic point $\xflow$.
Hence, we propose the following procedure for each large scale time step.
\begin{itemize}
\item[~]
\begin{itemize}
\item[\textbf{Step 0:}] Given from last time step: $\uflow_0\trkla{\xflow,\yflow}$, $\phi_0\trkla{\xflow,\yflow}$, $\Satw\trkla{\xflow}$, $\Satn\trkla{\xflow}$.
\item[\textbf{Step 1:}] Solve \eqref{eq:fast} on the reference cell with periodic boundary and for all $\xflow\in\Omega$:
\begin{itemize}
\item Adapt initial $\phi_0$ to match prescribed saturation.
\item Compute pseudo-stationary state.
\end{itemize}
\item[\textbf{Step 2:}] Compute effective parameters on the reference cell for all $\xflow\in\Omega$.
\begin{itemize}
\item Identify $\taufast$-periodicity interval $\mathcal{P}\trkla{\xflow}$.
\item Solve auxiliary problems \eqref{eq:auxn} and \eqref{eq:auxw} for all $\taufast\in \mathcal{P}\trkla{\mb{x}}$ to compute $\taufast$-dependent effective coefficients $\bs{\Lambda}^{\delta,n}_n$, $\bs{\Lambda}^{\delta,n}_w$, $\bs{\Lambda}^{\delta,w}_n$, and $\bs{\Lambda}^{\delta,w}_w$, depending on $\tau_{-1}$, and average over $\mathcal{P}(x)$.
\end{itemize}
\item[\textbf{Step 3:}] Update macroscopic quantities.
\begin{itemize}
\item Solve macroscopic equations \eqref{eq:saturations}, \eqref{eq:def:macrovel:2} to update $\Satw$, $\Satn$.
\end{itemize}
\end{itemize}
\end{itemize}
In each cell problem, we use the droplet topology from the last time step as initial condition, add a source or sink term to the Cahn--Hilliard equation to adjust $\iFluid\phi_0$ to the prescribed saturation, and search for a pseudo-stationary state of  \eqref{eq:fast} specified by $\nablay\mu_0\equiv0$.
Such a state exists due to Lemma \ref{lem:micro:energy}.\\
As the interface may not be stationary but transported through $\Fluid$, we have to identify periodic behavior with respect to the fast time-scale, and track the evolution of $\phi_0$ over the periodicity interval $\mathcal{P}\trkla{\xflow}$.
For all topologies in this interval, we solve the auxiliary problem \eqref{eq:auxn} and \eqref{eq:auxw} and average over $\mathcal{P}\trkla{\xflow}$ to obtain the effective parameters.
These computations can be done in parallel, as the different cell problems do not interact with each other.
In Step 3, we finally update the saturations by solving the upscaled equations.\\

In the remainder of this section, we will investigate the connection between the capillary pressure, the droplet topology, and the saturation.
Thereby, we will neglect the fluid velocity and compute the capillary pressure for various saturation values. Adding a source term to \eqref{eq:fast:CH:1}, allows us to control the mean value $\iFluid\phi_0$ by manipulating $\phi_0$ in the interfacial area.
As the stationary state of the system, which provides the capillary pressure via \eqref{eq:pressurejump:pure}, does not only depend on the mean value of $\phi_0$ (i.e. the saturation), but also on the initial conditions, we will proceed as follows.
We start with prescribed initial droplet configuration for a low saturation value of $0.0625$, increase the saturation step by step up to a maximal saturation of $0.9375$, before reducing it again to $0.0625$.
In this process, we will always use the last computed stationary state as initial configuration for the new simulation.

\subsection{Test case 1}
In our first test case, we consider a setup with a rather large porosity $\zeta=0.75$.
The reference cell $\RefDomain=\trkla{0,1}^2$ consists of the solid inclusion $\Solid=(0.25,0.75)^2$ and the fluidic domain $\Fluid=\RefDomain\setminus\Solid$.
Initially, we place a semicircular shaped droplet (center point at $(0.5,0.75)$) on top of the solid inclusion.
As we are searching for a stationary state for each saturation value and start with initial conditions that are not necessarily close to it, we expect to have time intervals, where the droplet topology changes rapidly, and time intervals, where the changes are hardly detectable.
To accommodate for this, we use an adaptive time discretization with time step sizes between $\tau_{\text{min}}=10^{-5}$ and $\tau_{\text{max}}=5.06\cdot10^{-2}$.
For the spatial discretization of the domain, we use a triangulation consisting of simplices with diameter $h\approx7.81\cdot 10^{-3}$.
The remaining parameters can be found in Table \ref{tab:tc1:param}.
\begin{table}[h]\center
\begin{tabular}{cccc||ccc}
$\zeta$&$M$ & $\sigma$ & $\delta$ & $h$ & $\tau_{\text{min}}$ & $\tau_{\text{max}}$\\
\hline
0.75&0.5 & 1 & 0.02 & $7.81\cdot 10^{-3}$ & $10^{-5}$ & $5.06\cdot 10^{-2}$
\end{tabular}
\caption{Parameters used in test case 1}
\label{tab:tc1:param}
\end{table}

\begin{figure}
\begin{minipage}[t]{0.3\textwidth}\center
\includegraphics[width=0.9\textwidth]{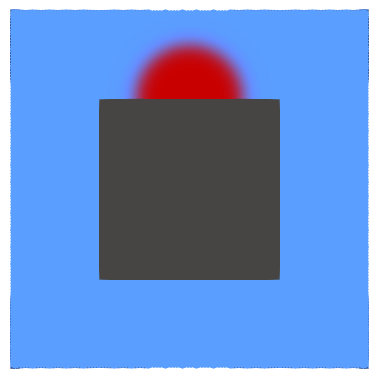}
\subcaption{$\Satw=0.0625$, wetting}
\label{fig:tc1:wet:0625}
\end{minipage}
\hfill
\begin{minipage}[t]{0.3\textwidth}\center
\includegraphics[width=0.9\textwidth]{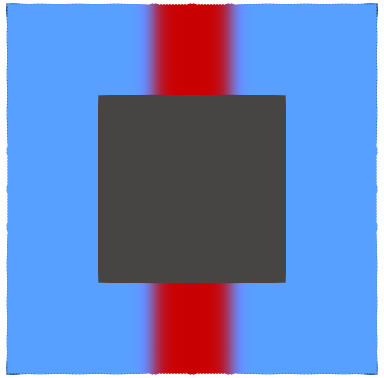}
\subcaption{$\Satw=0.137$, drainage}
\label{fig:tc1:drain:137}
\end{minipage}
\hfill
\begin{minipage}[t]{0.3\textwidth}\center
\includegraphics[width=0.9\textwidth]{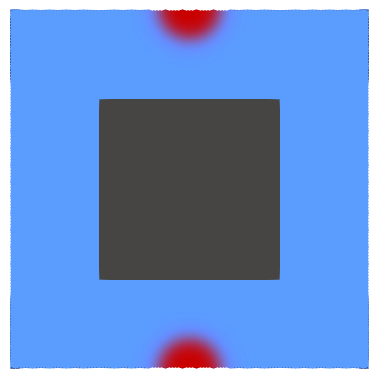}
\subcaption{$\Satw=0.0625$, drainage}
\label{fig:tc1:drain:0625}
\end{minipage}
\caption{Droplet topology for different saturation values.}
\end{figure}
\begin{figure}\center
\includegraphics[width=0.5\textwidth]{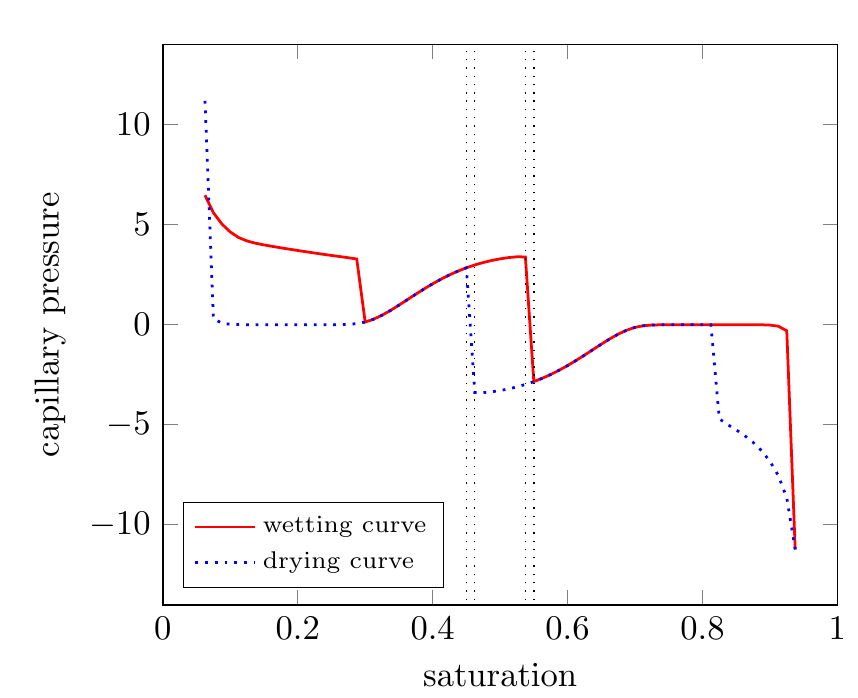}
\caption{Capillary pressure -- saturation relation for test case 1 in one cell.}
\label{fig:cap-sat:tc1}
\end{figure}

\begin{figure}[h]
\begin{minipage}[t]{0.38\textwidth}\center
\includegraphics[width=0.9\textwidth]{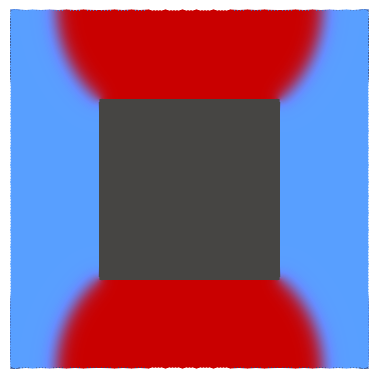}
\subcaption{$\Satw=0.45$, wetting}
\label{fig:tc1:wet:45}
\includegraphics[width=0.9\textwidth]{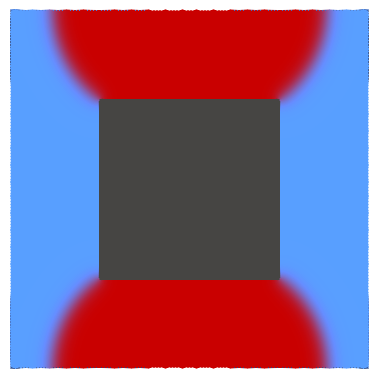}
\subcaption{$\Satw=0.4625$, wetting}
\label{fig:tc1:wet:4625}
\includegraphics[width=0.9\textwidth]{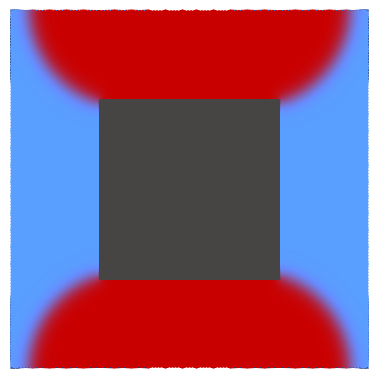}
\subcaption{$\Satw=0.5375$, wetting}
\label{fig:tc1:wet:5375}
\includegraphics[width=0.9\textwidth]{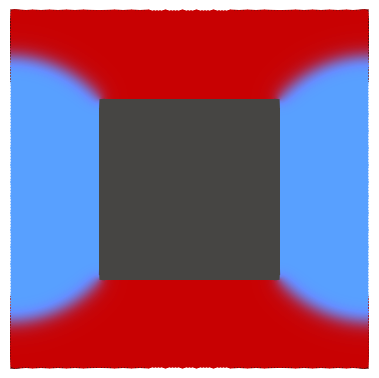}
\subcaption{$\Satw=0.55$, wetting}
\label{fig:tc1:wet:55}
\end{minipage}
\hfill
\begin{minipage}[t]{0.38\textwidth}\center
\includegraphics[width=0.9\textwidth]{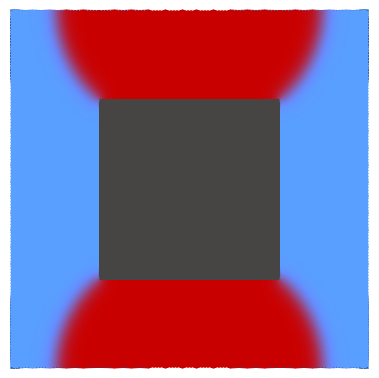}
\setcounter{subfigure}{7}
\subcaption{$\Satw=0.45$, drainage}
\label{fig:tc1:drain:45}
\includegraphics[width=0.9\textwidth]{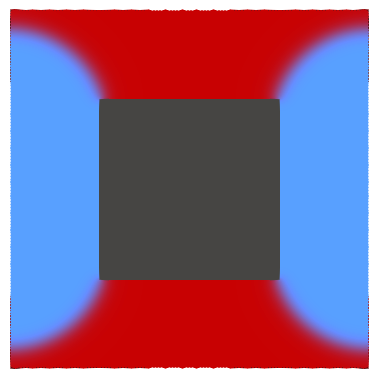}
\setcounter{subfigure}{6}
\subcaption{$\Satw=0.4625$, drainage}
\label{fig:tc1:drain:4625}
\includegraphics[width=0.9\textwidth]{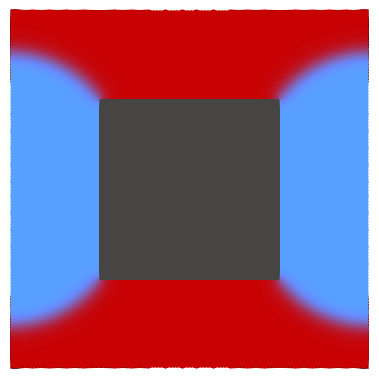}
\setcounter{subfigure}{5}
\subcaption{$\Satw=0.5375$, drainage}
\label{fig:tc1:drain:5375}
\includegraphics[width=0.9\textwidth]{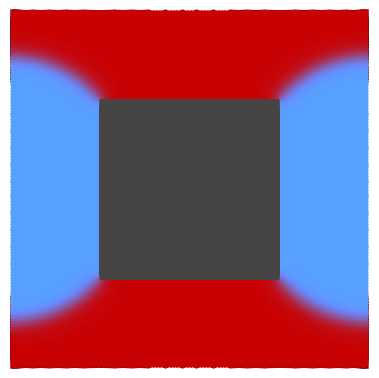}
\setcounter{subfigure}{4}
\subcaption{$\Satw=0.55$, drainage}
\label{fig:tc1:drain:55}
\end{minipage}
\center
\caption{Droplet evolution in test case 1.}
\label{fig:TC1:evo}
\end{figure}
Figure \ref{fig:tc1:wet:0625} shows the stationary state for the smallest considered saturation $\Sat_w=0.0625$.
Starting with this configuration, we increase the saturation gradually, wait until the system is in equilibrium, and measure the capillary pressure according to \eqref{eq:pressurejump:pure}.
The obtained curve is depicted in Figure \ref{fig:cap-sat:tc1}.
Thereby, the red, continuous line refers to the wetting curve associated with an increasing saturation and the blue dotted line refers to the drying curve for a decreasing saturation.
The vertical dotted, black lines indicate the saturation values corresponding the topologies depicted in Figure \ref{fig:TC1:evo}.\\
The curves plotted in Figure \ref{fig:cap-sat:tc1} exhibit some remarkable properties.
Looking at the wetting and drainage curves separately, we notice certain saturation values at which the capillary pressure seems to jump.
The first jump in the wetting curve occurs between $\Satw=0.2875$ and $\Satw=0.3$, where the droplet is large enough to touch and wet the next solid inclusion.
The next jump occurs between $\Satw=0.5375$ and $\Satw=0.55$.
Between these values, the wetting droplet (depicted in red in Fig. \ref{fig:TC1:evo}) reaches a critical size, where it touches and merges with the droplets of the adjacent periodicity cells (cf. Fig. \ref{fig:tc1:wet:5375}-\ref{fig:tc1:wet:55}) resulting in a massive change of the interface curvature.
The remaining pressure jumps can be explained in a similar way (cf. Fig. \ref{fig:tc1:drain:4625}-\ref{fig:tc1:drain:45}).\\
Secondly, the wetting and drying curves are not identical but show hysteresis effects.
This can be explained by the fact that the droplets need a certain size to touch and merge, but once they are connected, they stay connected and only break apart at a much lower saturation.
Consequently, the droplet topology does not only depend on the saturation but also on the history.
For illustration we refer to Fig. \ref{fig:tc1:wet:4625} and \ref{fig:tc1:drain:4625} and Fig. \ref{fig:tc1:wet:5375} and \ref{fig:tc1:drain:5375}.
Although, the topologies shown in the same row correspond to the same saturation value, the droplet topology is completely different.\\
One may also notice that the capillary pressure for very small saturations prescribed by the drainage curve is significantly higher than the one given by the wetting curve, which is not observed in real measurements.
The cause for this unexpected behavior is the assumption of $90^\circ$-contact angles resulting the creation of a small band connecting two adjacent solid inclusions (see Fig. \ref{fig:tc1:drain:137}).
With decreasing saturation, this band gets smaller and eventually breaks.
This rupture can occur at any point, but due to numerical artifacts, the rupture in our simulation occurred at the liquid-solid contact area, resulting in a detached droplet (see Fig. \ref{fig:tc1:drain:0625}).
As the curvature of a detached droplet is higher than the one of a droplet wetting the solid inclusion, the capillary pressure is higher.
This effect can be omitted by changing the prescribed contact angles.

\subsection{Test case 2}
In the second scenario, we look at a setting with a smaller porosity $\zeta=0.25$.
The reference cell $\RefDomain=(0,1)^2$ is a solid matrix $\Solid$ containing a vug and two channels connecting adjacent vugs. 
In this way we mimic typical geometries of pore network models (see \cite{Joekar_atalii2008}).
To be more precise, we have $\Fluid=\rkla{0.375,0.625}^2\cup \trkla{0.375,0.625}\times\trkla{0.4375,0.5625}\cup\trkla{0.4375,0.5625}\times\trkla{0.375,0.625}$ and $\Solid=\RefDomain\setminus\Fluid$ and place a droplet of the wetting fluid (depicted in red) in one of the channels (see Fig. \ref{fig:tc2:init}).
In this simulation, we used adaptive triangulation consisting of simplices with a maximal diameter of $h_{\text{max}}=1.56\cdot10^{-2}$ in the bulk areas and $h_{\text{min}}=2.76\cdot 10^{-3}$ in the interfacial area.
Similar to the first test case, an adaptive choice of the time step sizes between $\tau_{\text{max}}=4.87\cdot 10^{-2}$ and $\tau_{\text{min}}=2.93\cdot10^{-7}$ allows to accomodate for fast and slow changes.
For the remaining parameters, we refer to Table \ref{tab:tc2:param}.

\begin{table}[h]\center
\begin{tabular}{cccc||cccc}
$\zeta$&$M$ & $\sigma$ & $\delta$ & $h_{\text{min}}$ & $h_{\text{max}}$ & $\tau_{\text{min}}$ & $\tau_{\text{max}}$\\
\hline
0.25&0.1 & 1 & 0.005 & $2.76\cdot 10^{-3}$ & $1.56\cdot10^{-2}$ & $2.93\cdot10^{-7}$ & $4.87\cdot 10^{-2}$
\end{tabular}
\caption{Parameters used in test case 2}
\label{tab:tc2:param}
\end{table}

\begin{figure}\center
\includegraphics[width=0.3\textwidth]{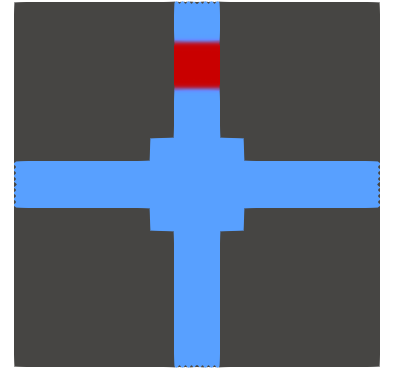}
\caption{Droplet configuration for $\Satw=0.0625$.}
\label{fig:tc2:init}
\end{figure}

\begin{figure}\center
\includegraphics[width=0.5\textwidth]{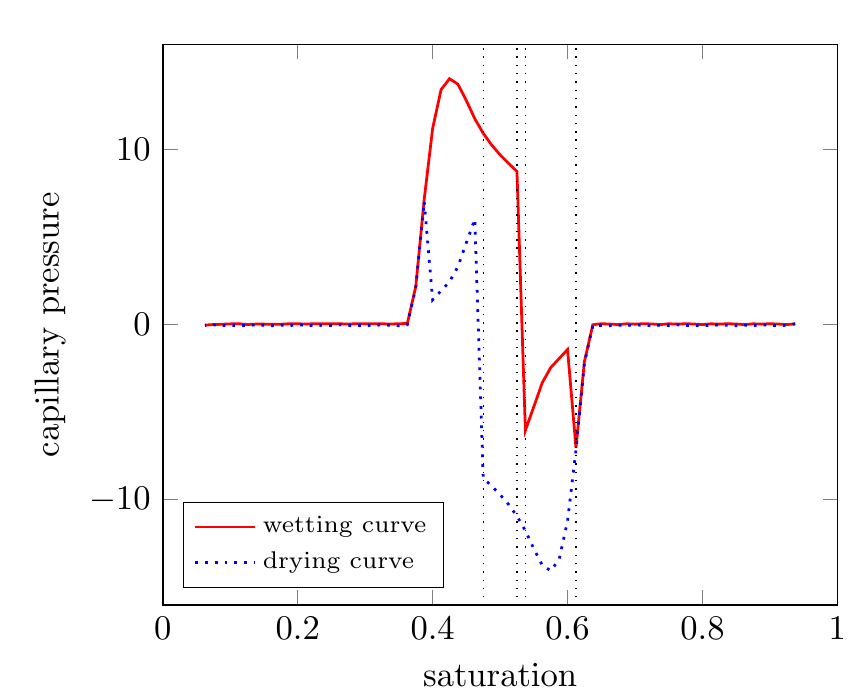}
\caption{Capillary pressure -- saturation relation for test case 2 in one cell.}
\label{fig:cap-sat:tc2}
\end{figure}
\begin{figure}[h]
\begin{minipage}[t]{0.38\textwidth}\center
\includegraphics[width=0.9\textwidth]{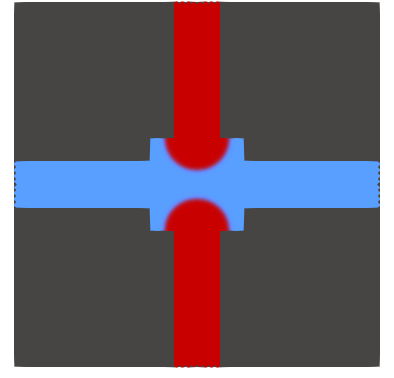}
\subcaption{$\Satw=0.475$, wetting}
\label{fig:tc2:wet:475}
\includegraphics[width=0.9\textwidth]{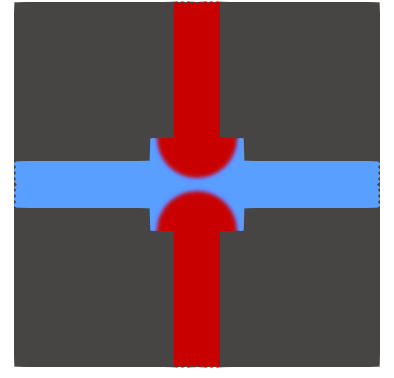}
\subcaption{$\Satw=0.525$, wetting}
\label{fig:tc2:wet:525}
\includegraphics[width=0.9\textwidth]{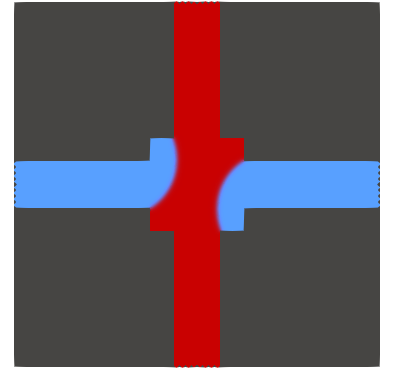}
\subcaption{$\Satw=0.5375$, wetting}
\label{fig:tc2:wet:5375}
\includegraphics[width=0.9\textwidth]{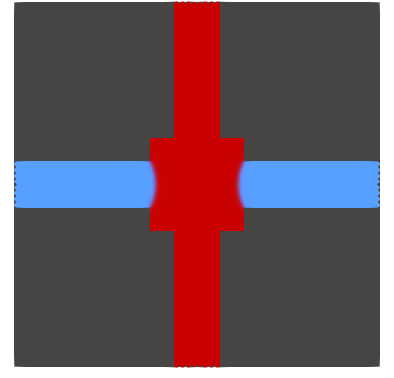}
\subcaption{$\Satw=0.6125$, wetting}
\label{fig:tc2:wet:6125}
\end{minipage}
\hfill
\begin{minipage}[t]{0.38\textwidth}\center
\includegraphics[width=0.9\textwidth]{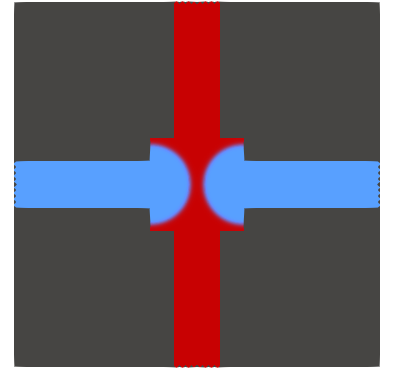}
\setcounter{subfigure}{7}
\subcaption{$\Satw=0.475$, drainage}
\label{fig:tc2:drain:475}
\includegraphics[width=0.9\textwidth]{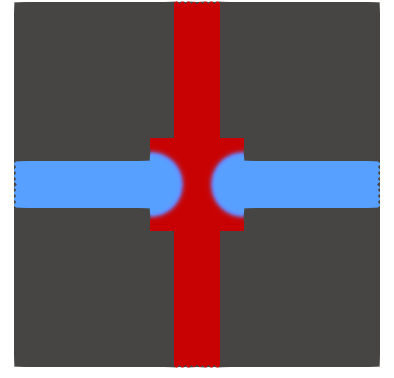}
\setcounter{subfigure}{6}
\subcaption{$\Satw=0.525$, drainage}
\label{fig:tc2:drain:525}
\includegraphics[width=0.9\textwidth]{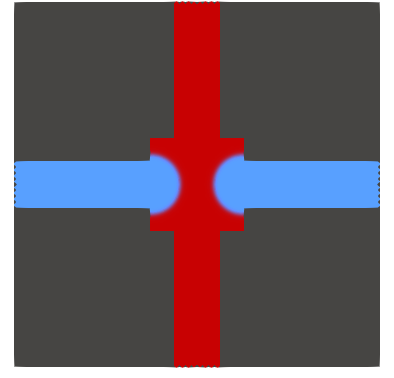}
\setcounter{subfigure}{5}
\subcaption{$\Satw=0.5375$, drainage}
\label{fig:tc2:drain:5375}
\includegraphics[width=0.9\textwidth]{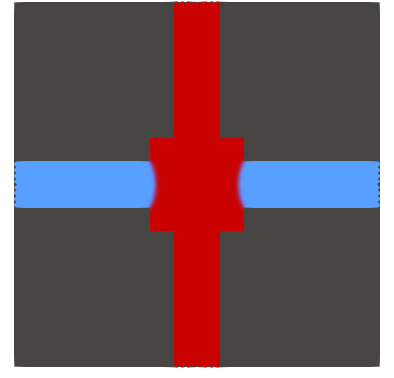}
\setcounter{subfigure}{4}
\subcaption{$\Satw=0.6125$, drainage}
\label{fig:tc2:drain:6125}
\end{minipage}
\center
\caption{Droplet evolution in test case 2.}
\label{fig:TC2:evo}
\end{figure}
Again, we are interested in the relation between capillary pressure and saturation.
Results from our simulations are plotted in Fig. \ref{fig:cap-sat:tc2}.
It is noticeable that the capillary pressure is zero for most saturations.
As the droplet is initially in one of the small channels and is assumed to attach to the wall with a contact angle of $90^\circ$, the arising fluid-fluid interface has zero curvature (cf. Fig. \ref{fig:tc2:init}).
When the droplet grows, it starts to fill the channel but does not change the interface curvature until the wetting phase starts flooding into the vug (see Fig \ref{fig:tc2:wet:475}).
At his point, the capillary pressure starts to rise with increasing curvature (cf. Fig. \ref{fig:tc2:wet:475} and Fig. \ref{fig:tc2:wet:525}) until adjacent droplets connect, which leads to an abrupt change in curvature and capillary pressure (see Fig. \ref{fig:tc2:wet:5375}).\\
Similar to our first test case, we observe hysteresis effects in the capillary pressure-saturation relation.
As depicted in \cref{fig:TC2:evo}, the critical saturation needed to connect the wetting phase in vertical direction is higher than the saturation needed to keep an already established connection.

\section{Discussion and Outlook}
The aim of this paper is to go significantly beyond the known results deriving a (mostly) macroscale  model for two-phase flow in porous media by periodic homogenization without discarding from the beginning those dynamical processes (cf. \ref{item:stationaryinterface}) on the microscale which are responsible e.g. for hysteresis phenomena (which are often excluded via an assumption like \ref{item:capillarypressure}). Decisive for the success of the approach is the unconventional scale separation assumption \eqref{eq:expansion:u}, \eqref{eq:expansion:phi} in combination with \eqref{eq:expansion:mu}, \eqref{eq:expansion:p} and the introduction of different related time scales \eqref{eq:expansion:time}, which seems to be natural to cope with instationary processes. Both of these aspects do not seem to be easily incorporated in a volume averaging approach. First numerical simulations have shown that the model (at the pore scale) exhibits phenomena not present in the standard model, e.g. hysteresis is an imbibition-drainage cycle. The full model has the form of a micro-macro model, which although demanding seems to be numerically tractable. Such an algorithmic realisation is on the agenda in the near future, in particular exploiting the massive parallelism in an iterative treatment. Further developments of the model concern the boundary conditions, i.e. removing the unnatural contact angle up to a dynamic contact angle. Furthermore the micro-macro structure shows adaptation potential to incorporate it only locally there, ``where needed''. In general we expect that our more flexible space/time multiscale expansion approach to be applicable to further and more complex problems, e.g. including transport and possibly geometry changing surface reactions.

\section*{Acknowledgement} Parts of the presented results were achieved while S. Metzger was at the Illinois Institute of Technology, Chicago and was supported by the NSF grant number NSF-DMS 1759536.

\bibliographystyle{amsplain}
\providecommand{\bysame}{\leavevmode\hbox to3em{\hrulefill}\thinspace}
\providecommand{\MR}{\relax\ifhmode\unskip\space\fi MR }
\providecommand{\MRhref}[2]{%
  \href{http://www.ams.org/mathscinet-getitem?mr=#1}{#2}
}
\providecommand{\href}[2]{#2}

\end{document}